\newtheorem{theorem}{Theorem}[section]
\newtheorem{remark}[theorem]{Remark}
\newtheorem{lemma}[theorem]{Lemma}
\DeclareMathOperator*{\Tr}{Tr}
\let\Re\relax
\let\Im\relax
\DeclareMathOperator*{\Re}{Re}
\DeclareMathOperator*{\Im}{Im}
\newcommand{\Trper}{\underline{\Tr}}
\newcommand{\per}{\mathrm{per}}
\newcommand{\defect}{\mathrm{def}}
\newcommand{\nucl}{\mathrm{nucl}}
\newcommand{\vc}{{v_{\rm c}}}
\newcommand{\vck}{{v_{\rm c,k}}}
\newcommand{\vcper}{{v_{\per}}}
\newcommand{\R}{\mathbb R}
\newcommand{\C}{\mathbb C}
\newcommand{\N}{\mathbb N}
\newcommand{\Z}{\mathbb Z}
\newcommand{\cH}{\mathcal H}
\newcommand{\cR}{\mathcal R}
\newcommand{\cN}{\mathcal N}
\newcommand{\BZ}{\mathcal B}
\newcommand{\Schatten}{{\mathfrak S}}
\newcommand{\cont}{\mathcal C}
\newcommand{\kerker}{\mathcal K}
\begin{document}

\title{Screening in the finite-temperature reduced Hartree-Fock model}
\author{Antoine Levitt}
\begin{abstract}
  We prove the existence of solutions of the reduced Hartree-Fock
  equations at finite temperature for a periodic crystal with a small
  defect, and show total screening of the defect charge by the
  electrons. We also show the convergence of the damped
  self-consistent field iteration using Kerker preconditioning to
  remove charge sloshing. As a crucial step of the proof, we define
  and study the properties of the dielectric operator.
\end{abstract}
\maketitle
\section{Introduction}
A point charge $Q$ placed in vacuum creates an electric potential
$\frac{Q}{4\pi r}$, $r$ being the distance to the charge (in units where the permittivity of the
vacuum $\varepsilon_{0}$ is taken to be $1$). By contrast, when an
defect is placed in a material, the material reorganises itself: a
positive charge creates an energetically favorable region for the
electrons, which flock towards the defect. At equilibrium, they form a
``shield'' of negative charge, effectively screening the Coulomb
interaction at long range.

Phenomenologically, insulators and metals exhibit a different
screening behavior. In insulators, electrons are tightly bound to the
nuclei, and cannot deviate too much from their equilibrium position to
move towards the defect. Accordingly, the long-range behavior of the
total potential, including the effects of the electrons, is
$Q/(4\pi \varepsilon r)$, where $\varepsilon > 1$ is the dielectric
constant of the material. Thus, effectively, the charge $Q$ is scaled
by the dielectric constant $\varepsilon$: this is called partial
screening.

In metals, however, electrons are free to move in response to the
defect and totally screen it, so that the total potential becomes
effectively short-range. A simple model for the total potential is the Yukawa potential
\begin{align*}
  V(x) = \frac{Q e^{-k |x|}}{|x|}
\end{align*}
where $1/k$ is the screening length. At low temperatures however, $V$
displays an oscillatory behavior with a power-law decay, called
Friedel oscillations.

The purpose of this paper is to justify the total screening of small
defects in the reduced Hartree-Fock (rHF) model at finite temperature. This is to be
contrasted with the partial screening of insulators at zero
temperature obtained in \cite{cances2010dielectric} in the same model:
at finite temperature, electrons are mobile and behave as in a metal.
We also justify the Kerker preconditioning scheme, which neutralizes
the ``charge sloshing'' effect that slows down simple self-consistent
iterations in extended systems \cite{kerker1981efficient}.

For a finite system of $N_{\rm el}$ electrons in an external potential $V_{\rm ext}$, the
reduced Hartree-Fock (rHF) equation for the total potential $V$ is given by
\begin{align*}
  \begin{cases}
    V = V_{\rm ext} + \vc F_{\varepsilon_F}(V),\\
    \int_{\R^{3}} F_{\varepsilon_F}(V) = N_{\rm el}.
  \end{cases}
\end{align*}
The Coulomb operator $\vc$ is given by the convolution
\begin{align*}
  (\vc \rho)(x) = \frac 1 {4\pi} \int_{\R^{3}} \frac{\rho(y)}{|x-y|} dy
\end{align*}
or in Fourier space
\begin{align*}
  \widehat{\vc \rho}(q) = \frac{\widehat \rho(q)}{|q|^{2}}.
\end{align*}
The potential-to-density mapping $F_{\varepsilon_F}$ is given by
\begin{align*}
  F_{\varepsilon_F}(V)(x) = f_{\varepsilon_{F}}(-\Delta + V)(x,x)
\end{align*}
where the Fermi-Dirac distribution $f_{\varepsilon_{F}}$ is
\begin{align*}
  f_{\varepsilon_{F}}(\varepsilon) = \frac{1}{1+ e^{\frac{\varepsilon-\varepsilon_{F}}{k_{B} T}}},
\end{align*}
with $T$ the temperature and $k_{B}$ the Boltzmann constant. The
density matrix $f_{\varepsilon_{F}}(H)$ is defined through the functional calculus of
self-adjoint operators, and $f_{\varepsilon_{F}}(H)(x,x)$ is the associated density (see
Section \ref{sec:def_density}). The Fermi level $\varepsilon_{F}$ is
determined through the charge neutrality condition $\int_{\R^{3}} F_{\varepsilon_F}(V) = N_{\rm
el}$.

This model, also called the Hartree model, random phase
approximation (RPA) or Schrödinger-Poisson, can be seen as a
simplification of Kohn-Sham density functional theory where the
exchange-correlation potential is neglected, or of the Hartree-Fock
model without the exchange term. In the zero-temperature case, it
derives from a convex variational principle, which allows for a
complete existence and uniqueness theory \cite{solovej1991proof}.

This convexity also means that it is possible to justify rigorously
the thermodynamic limit for periodic systems
\cite{catto2001thermodynamic}, something that seems out of reach for
the full Hartree-Fock or Kohn-Sham model. The resulting periodic model
takes the following form. Let $\cR$ be the crystal lattice, $\Gamma$ a
unit cell, and $W_{\nucl}$ the $\cR$-periodic potential created by the
nuclei. Then the periodic rHF model is
\begin{align}
  \label{eq:periodic_hartree}
  \begin{cases}
    W = W_{\nucl} + \vcper F_{\varepsilon_{F}}(W)\\
    \int_{\Gamma} F_{\varepsilon_{F}}(W) = N_{\rm el}
  \end{cases}
\end{align}
where $\vcper \rho$ is the unique periodic solution of
\begin{align}
  \label{eq:vcper}
  \begin{cases}
    -\Delta (\vcper \rho) = \rho - \frac 1{|\Gamma|}\int_{\Gamma} \rho\\
    \int_{\Gamma} (\vcper \rho) = 0
  \end{cases}
\end{align}
and $N_{\rm el}$ is now the number of electrons per unit cell. The
potential-to-density mapping takes the same form
$F_{\varepsilon_{F}}(W) = f_{\varepsilon_{F}}(-\Delta + W)(x,x)$, and
maps periodic potentials to periodic densities.

The periodic model with zero temperature was studied in
\cite{catto2001thermodynamic}, where it is derived as a
thermodynamic limit. The existence and uniqueness of solutions
$W(W_{\rm nucl})$ to
\eqref{eq:periodic_hartree} at finite temperature was proved in
\cite{nier1993variational}, using a variational principle for the
potential $W$. We study the convergence of fixed-point iterations to
solve these equations, both for its independent interest and to
establish the methods and estimates needed later for the study of defects.
First, for a given $W$, the charge neutrality condition can be
uniquely solved for $\varepsilon_{F}$ (see Lemma
\ref{lem:fermi_level}), yielding a map $\varepsilon_{F}(W)$ and
allowing us to reformulate the self-consistent equation as simply
$W = W_{\nucl} + \vcper F(W)$, with $F(W) = F_{\varepsilon_{F}(W)}(W)$. A
very natural iterative method to solve this equation is
\begin{align*}
  W_{n+1} = W_{\nucl} + \vcper F(W_{n}),
\end{align*}
the simple self-consistent iteration. Unfortunately, as is
well-known, this algorithm does not necessarily converge, not even
locally \cite{cances2000convergence, levitt2012convergence}. This
suggests the simple damping (or mixing) strategy
\begin{align}
  \label{eq:damped_iter_per}
  W_{n+1} = W_{n} + \alpha(W_{\nucl} + \vcper F(W_{n}) - W_{n})
\end{align}
for small $\alpha$. It is not \textit{a priori} clear why this
iteration, based on an arbitrary splitting of the self-consistent
equation, should converge, even for small $\alpha > 0$.
We prove that this is the case (recall that
$L^{2}_{\per}$ is the space of $\cR$-periodic functions that are
square-integrable over the unit cell $\Gamma$)
\begin{theorem}
  \label{thm:main_periodic}
  Assume that there is $W_{\nucl}^{*} \in L^{2}_{\per}$ and
  $W^{*} \in L^{2}_{\per}$ such that
  $W^{*} = W^{*}_{\nucl} + \vcper F(W^{*})$. Then there are
  $\alpha_{0} > 0$, neighborhoods $\mathcal W_{\nucl}$ of
  $W_{\nucl}^{*}$ and $\mathcal W$ of $W^{*}$ in $L^{2}_{\per}$ such
  that, for all $W_{\nucl} \in \mathcal W_{\nucl}$, there is a unique
  solution $W(W_{\nucl}) \in \mathcal W$ of
  $W = W_{\nucl} + \vcper F(W)$. Furthermore, for all
  $0 < \alpha \leq \alpha_{0}$, the iteration
  \eqref{eq:damped_iter_per}
  with $W_{0} \in \mathcal W$ converges to $W(W_{\nucl})$ in $L^{2}_{\per}$.
\end{theorem}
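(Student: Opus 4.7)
The plan is to linearize $W = W_{\nucl} + \vcper F(W)$ around $W^{*}$, show that the resulting \emph{dielectric operator} $\varepsilon := I - \vcper F'(W^{*})$ is bounded and invertible on $L^{2}_{\per}$ with spectrum in an interval $[1, 1+M]$, and then combine the implicit function theorem (for existence and uniqueness of $W(W_{\nucl})$) with the standard spectral-radius local convergence argument (for the damped iteration). The upper bound on $\sigma(\varepsilon)$ will directly control the admissible range of the damping parameter $\alpha$.

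\textbf{Dielectric operator.} The first step is to establish the $C^{1}$ regularity of $F: L^{2}_{\per} \to L^{2}_{\per}$ near $W^{*}$. Using the Cauchy representation
\begin{align*}
  f_{\varepsilon_{F}}(H) = \frac{1}{2i\pi}\oint f_{\varepsilon_{F}}(z)(z-H)^{-1}\,dz
\end{align*}
along a contour in the resolvent set of $H^{*} := -\Delta + W^{*}$ on which $f_{\varepsilon_{F}}$ is analytic (the finite-temperature hypothesis is crucial here: $f_{\varepsilon_{F}}$ is analytic on the horizontal strip $|\Im z - \varepsilon_{F}| < \pi k_{B}T$), together with the smoothness of $\varepsilon_{F}(W)$ from Lemma \ref{lem:fermi_level}, one writes $\chi_{0} := F'(W^{*})$ as the sum of the diagonal of the first-order density-matrix perturbation and a rank-one Fermi-level correction. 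One then verifies that $\chi_{0}$, as a bounded operator on $L^{2}_{\per}$, (i) annihilates constants and has range in the subspace of $\cR$-periodic functions of zero mean on $\Gamma$, (ii) is self-adjoint, and (iii) is negative semi-definite. By \eqref{eq:vcper}, $\vcper$ is the Fourier multiplier with symbol $|k|^{-2}$ for $k \in \cR^{*}\setminus\{0\}$ and $0$ for $k=0$, hence a bounded, self-adjoint, positive semi-definite, compact operator on $L^{2}_{\per}$. Its square root $\vcper^{1/2}$ is therefore well-defined and bounded, and the standard identity $\sigma(AB)\setminus\{0\} = \sigma(BA)\setminus\{0\}$ applied with $A = \vcper^{1/2}$ and $B = \vcper^{1/2}\chi_{0}$ yields
\begin{align*}
  \sigma(\vcper \chi_{0}) \setminus \{0\} = \sigma\bigl(\vcper^{1/2} \chi_{0} \vcper^{1/2}\bigr) \setminus \{0\} \subset [-M, 0],
\end{align*}
with $M := \norm{\vcper^{1/2}\chi_{0}\vcper^{1/2}}$, since $\vcper^{1/2}\chi_{0}\vcper^{1/2}$ is bounded, self-adjoint and negative semi-definite. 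Consequently $\sigma(\varepsilon) \subset [1, 1+M]$, and $\varepsilon$ is invertible on $L^{2}_{\per}$.

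\textbf{Existence and convergence.} Since $F$ is $C^{1}$ and the partial Jacobian $\varepsilon = d_{W}(W - W_{\nucl} - \vcper F(W))(W^{*}, W_{\nucl}^{*})$ is invertible, the implicit function theorem yields neighborhoods $\mathcal W_{\nucl}$ and $\mathcal W$ and a unique $C^{1}$ map $W_{\nucl} \mapsto W(W_{\nucl})$ solving \eqref{eq:periodic_hartree} in $\mathcal W$. For the damped iteration, the fixed-point map $T_{\alpha}(W) := W + \alpha(W_{\nucl} + \vcper F(W) - W)$ has Jacobian $I - \alpha\varepsilon(W_{\nucl})$ at $W(W_{\nucl})$; by continuity of $\chi_{0}$ in $W$ its spectrum is contained in $[1-\alpha(1+M'), 1-\alpha]$ for $W_{\nucl}$ in a possibly smaller neighborhood of $W_{\nucl}^{*}$ and some $M' > M$. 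Taking any $\alpha_{0} \in (0, 2/(1+M'))$ thus gives spectral radius strictly less than $1$ for all $\alpha \in (0, \alpha_{0}]$. A classical local convergence result---pick an equivalent norm on $L^{2}_{\per}$ in which $T_{\alpha}'(W(W_{\nucl}))$ is a strict contraction and apply Banach's fixed-point theorem on a sufficiently small ball---then yields convergence of $W_{n}$ to $W(W_{\nucl})$ for every $W_{0}$ in a suitably shrunk $\mathcal W$.

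\textbf{Main obstacle.} The real work is the structural analysis of $\chi_{0}$: proving $L^{2}_{\per}$-boundedness together with self-adjointness and negative semi-definiteness, and correctly accounting for the Fermi-level correction. The finite-temperature hypothesis is essential here, as the analyticity of $f_{\varepsilon_{F}}$ on a strip around $\sigma(H^{*})$ underpins the contour-integral calculus and yields the Schatten-norm bounds needed to justify the trace identities behind self-adjointness and negativity. A minor technical point is the degeneracy of $\vcper$ on constants, handled by splitting $W$ according to its mean: $\bar W$ is fixed by $\bar W_{\nucl}$ and the interesting dynamics takes place on the mean-zero subspace where $\vcper$ is positive definite.
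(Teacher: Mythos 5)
Your proposal is correct and follows essentially the same route as the paper: contour-integral analyticity of $F$, self-adjointness and non-positivity of $F'(W^{*})$ (including the rank-one Fermi-level correction), the identity $\sigma(AB)\setminus\{0\}=\sigma(BA)\setminus\{0\}$ applied with $\sqrt{\vcper}$ to locate $\sigma(\vcper F'(W^{*}))$ in $[-M,0]$, and then an equivalent norm plus Banach fixed point to get convergence of the damped iteration. The only cosmetic difference is that you invoke the implicit function theorem separately for existence while the paper obtains existence, uniqueness and convergence in one stroke from its abstract fixed-point theorem (Theorem \ref{thm:abstract_FP}); the substance is identical.
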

Note that the Jacobian of the fixed-point mapping
\eqref{eq:damped_iter_per} is
\begin{align*}
  J_{\alpha}(W) = 1 - \alpha  + \alpha \vcper F_{\varepsilon_{F}}'(W).
\end{align*}
We show in Lemma \ref{lem:F_per} that the Jacobian
$F_{\varepsilon_{F}}'(W)$ is bounded, self-adjoint and non-positive from
$L^{2}_{\per}$ to itself. Since $\vcper F_{\varepsilon_{F}}'(W)$ is
the product of a non-negative and a non-positive self-adjoint
operator, it has non-positive spectrum, and therefore $J_{\alpha}$
will have spectrum between $-1$ and $1$ for $\alpha$ small enough,
proving Theorem \ref{thm:main_periodic}. To analyze
$F_{\varepsilon_{F}}'(W)$, we use a contour integral formulation which
allows us to prove sum-over-states expressions for the derivatives. A
similar method was used in \cite{nier1993variational}. Although we focus on
this very simple algorithm, the behavior of more complex algorithms
such as Anderson acceleration (also known as DIIS or Pulay mixing)
depends crucially on the properties of the underlying fixed-point
iteration \cite{walker2011anderson}, and our analysis is a necessary
first step towards the understanding of these methods.

We next study defects. The model for defects for
insulators at zero temperature was introduced in \cite{cances2008new},
again through a thermodynamic limit argument. At finite temperature,
the model is as follows. We fix a solution $W_{\per}$ of the periodic
model above and its Fermi level $\varepsilon_{F}$. For a given defect
potential $V_{\defect}$, we solve the equation
\begin{align}
  \label{eq:FP_eq_def}
  V &= V_{\defect} + \vc G(V)
\end{align}
for $V$, with $G(V)$ the renormalized
potential-to-density mapping
\begin{align}
  G(V) &= \left(f_{\varepsilon_{F}}(-\Delta + W_{\per} + V)-f_{\varepsilon_{F}}(-\Delta + W_{\per})\right)(x,x).
\end{align}
Here $V$ is the total change in potential created by the addition of
the defect $V_{\defect}$. We note that, to our knowledge, neither this defect model
nor even the periodic model has been derived from a thermodynamic
limit for the rHF model at finite temperature (see
\cite{chen2018thermodynamic} for related work in a simpler model).

It is natural to try to solve
this equation by a procedure similar to \eqref{eq:damped_iter_per}:
\begin{align}
  \label{eq:damped_iter_def}
  V_{n+1} = V_{n} + \alpha(V_{\defect} + \vc G(V_{n}) - V_{n}).
\end{align}
However, in contrast to the periodic case, the operator $\vc$ is not
bounded. This is easily seen by noting that $\vc$ acts in Fourier
space as a multiplication operator by $1/|q|^{2}$. The
iteration~\eqref{eq:damped_iter_def} is therefore not well-defined.
The practical consequence of this is that, when the equations are
truncated to a finite box of linear size $L$ with appropriate boundary
conditions, $\vc$ has eigenvalues on the order of $L^{2}$. This forces
$\alpha$ to be on the order of $L^{-2}$, which slows down the
convergence\footnote{This reasoning also holds true for more complex
  methods. The Jacobian of the system has a condition number
  proportional to $L^{2}$, and therefore we expect simple methods to
  require a number of iterations proportional to $L^{2}$, and
  Krylov-type methods such as Anderson acceleration to require a
  number of iterations proportional to $L$
  \cite{walker2011anderson,saad2003iterative}.}. Because the large
eigenvalues are caused by low wavelengths, this appears in
calculations as charge moving back and forth at the extremities of the
system, a phenomenon known as charge sloshing
\cite{kerker1981efficient}. This effect does not appear when the
density is constrained to be periodic, as evidenced by Theorem
\ref{thm:main_periodic}.

This can be fixed by using a more elaborate numerical method. The
Newton method applied to \eqref{eq:FP_eq_def} is
\begin{align*}
  % \label{eq:newton_def}
  V_{n+1} = V_{n} + J(V_{n})^{-1} (V_{\defect} + \vc G(V_{n}) - V_{n})
\end{align*}
where
\begin{align*}
  J(V) &= 1- \vc G'(V)
\end{align*}
There is an intimate link between the Jacobian $J(V)$, describing the behavior of
iterative algorithms, and the linear response properties of the
system. The operator $\chi_{0} = G'(0)$ is called the independent-particle
susceptibility operator. It describes the linear response of the
density of a non-interacting system of electrons to a small defect
potential. It can be computed through the Adler-Wiser sum-over-states
formula \cite{adler1962quantum,wiser1963dielectric}, which we prove
in Lemma \ref{lem:chi_0_def}. The operator
$\varepsilon^{-1} = J(0)^{-1} = (1 - \vc \chi_{0})^{-1}$ is called the
dielectric operator. As we will see, it describes the linear
response of the total potential $V$ to a defect $V_{\defect}$.

Since $J(V)$ or even $J(0)$ is difficult to compute, an approximation
has to be found, yielding a preconditioned scheme. A simple
approximation can be found using the Thomas-Fermi theory of the free
electron gas \cite{lieb1977thomas}. This model takes the same form
\eqref{eq:FP_eq_def} of a fixed-point equation, but with a much simpler
potential-to-density mapping
\begin{align*}
  G_{\rm TF}(V) = (\varepsilon_{F} - V)^{\frac 3 2}_{+}
\end{align*}
where $x_{+} = \max(x,0)$. In this case we simply have
$\chi_{0,\rm TF} = G_{\rm TF}'(0) = - \frac 3 2
\sqrt{\varepsilon_{F}}$. Therefore, the operator
$\varepsilon_{\rm TF} = J_{\rm TF}(0) = 1 - \vc \chi_{0,\rm TF}$ takes
the simple form of a multiplication operator in Fourier space
\begin{align*}
  \varepsilon_{\rm TF}(q) = 1 - \frac 1 {|q|^{2}} \chi_{0,\rm TF} = \frac{|q|^{2}-\chi_{0,\rm TF}}{|q|^{2}}.
\end{align*}
The $1/|q|^{2}$ divergence for low wavelengths created by Coulomb
interaction is the cause of charge sloshing. One can then simply take
the inverse of this Thomas-Fermi Jacobian as a preconditioner. In
practice, the unknown constant $\chi_{0,\rm TF}$ is estimated
according to the system under consideration (in this paper we take it
equal to $-1$ for simplicity). This choice,
\begin{align}
  \kerker(q) = \frac{|q|^{2}}{1 + |q|^{2}},
\end{align}
or in operator form $\kerker = \frac{-\Delta}{1-\Delta}$, is known as
Kerker preconditioning \cite{kerker1981efficient}. The preconditioned
fixed-point iteration is then
\begin{align}
  \label{eq:kerker_def}
  V_{n+1} = V_{n} + \alpha \kerker (V_{\defect} + \vc G(V_{n}) - V_{n})
\end{align}
which is found in practice to substantially improve the convergence of
self-consistent algorithms.

We now turn to the related matter of screening. Expanding
\eqref{eq:FP_eq_def} to first order in $V_{\defect}$, we obtain formally
\begin{align*}
  V = (1-\vc \chi_{0})^{-1} V_{\defect} + O(\|V_{\defect}\|^{2}).
\end{align*}
As mentioned previously, the operator
\begin{align}
  \label{eq:def_dielectric}
  \varepsilon^{-1} = (1-\vc \chi_{0})^{-1}
\end{align}
is the dielectric operator. In the case of the homogeneous
Thomas-Fermi model, $\chi_{0}$ is a negative constant, and
$\varepsilon_{\rm TF}^{-1}$ is a Fourier multiplication operator given by
\begin{align*}
  \varepsilon_{\rm TF}^{-1}(q) = \frac{|q|^{2}}{|q|^{2}-\chi_{0,\rm TF}}
\end{align*}
When $V_{\defect}(x) = \frac Q{|x|}$, up to normalization we have
$\widehat{V_{\defect}}(q) = \frac Q{|q|^{2}}$, and so
\begin{align*}
  \widehat{\varepsilon_{\rm TF}^{-1} V_{\defect}}(q) = \frac Q {|q|^{2} - \chi_{0,\rm TF}},
\end{align*}
the Fourier transform of a short-range Yukawa potential
\begin{align*}
  (\varepsilon_{\rm TF}^{-1} V_{\defect})(x) = Q \,\frac{ e^{-\sqrt{-\chi_{0,\rm TF}} |x|}}{|x|}.
\end{align*}
The Thomas-Fermi theory of screening beyond linear response was discussed in
\cite{lieb1977thomas}, and extended to the Thomas--Fermi--von
Weisz{\"a}cker model in \cite{cances2011local,nazar2017locality}.

The purpose of this paper is to extend the justification of Kerker
preconditioning as well as the Thomas-Fermi theory of screening to the
more realistic rHF model of defects.

Our main result is
\begin{theorem}
  \label{thm:existence_defect}
  Fix $W_{\per} \in L^{2}_{\per}$ and $\varepsilon_{F} \in \R$. There are $\alpha_{0} > 0$ and neighborhoods
  $\mathcal V_{\defect}$ and $\mathcal V$ of $0$ in $\vc H^{-2}$
  and $L^{2}$ respectively such that, for all $V_{\defect} \in \mathcal V_{\defect}$,
  there is a unique solution
  $V(V_{\defect})$ of
  \begin{align*}
    V = V_{\defect} + \vc G(V)
  \end{align*}
  in $\mathcal V$. Furthermore, for $0 < \alpha \leq \alpha_{0}$, the
  iteration
  \begin{align*}
    V_{n+1} = V_{n} + \alpha \kerker (V_{\defect} + \vc G(V_{n}) - V_{n})
  \end{align*}
  with $V_{0} \in \mathcal V$ converges to $V(V_{\defect})$ in
  $L^{2}$.

  We have the expansion
  \begin{align*}
    V(V_{\defect}) = \varepsilon^{-1} V_{\defect} + O(\|V_{\defect}\|_{\vc H^{-2}}^{2})
  \end{align*}
  in $L^{2}$, where
  \begin{align*}
    \varepsilon^{-1} = (1 - \vc \chi_{0})^{-1}
  \end{align*}
  is continuous from $\vc H^{-2}$ to $L^{2}$, and $\chi_{0} =
  G'(0)$ is continuous from $L^{2}$ to itself.
\end{theorem}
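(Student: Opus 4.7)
The plan is to view the Kerker-preconditioned iteration
\[
T(V) := V + \alpha\kerker\bigl(V_{\defect} + \vc G(V) - V\bigr)
\]
as a fixed-point map $L^{2}(\R^{3}) \to L^{2}(\R^{3})$ and apply the Banach fixed point theorem in a small $L^{2}$-ball around $V=0$, which is the unique small solution when $V_{\defect}=0$. The principal new obstacle compared with Theorem~\ref{thm:main_periodic} is the unboundedness of $\vc$ on $L^{2}$; the whole point of Kerker preconditioning is the identity $\kerker\vc = (1-\Delta)^{-1}$ as Fourier multipliers, which makes $\kerker\vc$ and hence $T$ bounded on $L^{2}$. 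Combined with the assumption $V_{\defect}\in\vc H^{-2}$ (which implies $\kerker V_{\defect}\in L^{2}$) and an $\R^{3}$-analog of Lemma~\ref{lem:F_per} showing that $G:L^{2}\to L^{2}$ is $C^{1}$ with $G'(V)$ bounded, self-adjoint and non-positive, this makes $T$ a well-defined $C^{1}$ map from $L^{2}$ to $L^{2}$. The analog of Lemma~\ref{lem:F_per} on $\R^{3}$ is a matter of repeating the contour integral and resolvent arguments with $W_{\per}$ as a fixed background potential and $\varepsilon_{F}$ held fixed.

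The core of the proof is a spectral estimate on the Jacobian $T'(V) = I - \alpha M(V)$, where
\[
M(V) := \kerker - (1-\Delta)^{-1}G'(V) = (1-\Delta)^{-1}\bigl(-\Delta - G'(V)\bigr).
\]
$M(V)$ is not self-adjoint on $L^{2}$, but conjugation by $Q = (1-\Delta)^{1/2}$ (which commutes with the Fourier multipliers involved) produces the self-adjoint, non-negative operator
\[
Q M(V) Q^{-1} = (1-\Delta)^{-1/2}\bigl(-\Delta - G'(V)\bigr)(1-\Delta)^{-1/2},
\]
and strict contraction of $T'(V)$ for small $\alpha$ reduces to this operator having spectrum in some $[c_{1}, c_{2}]$ with $c_{1} > 0$. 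The upper bound is routine from $L^{2}$-boundedness of $G'$. The lower bound is equivalent in quadratic forms to
\[
-\Delta - G'(V) \ge c\,(1-\Delta),
\]
which is the main obstacle. At high $|q|$, $-\Delta$ carries the estimate; at low $|q|$, where $-\Delta$ degenerates, the bound must come entirely from $-G'(V)$. This is where finite temperature enters in an essential way: $f'_{\varepsilon_{F}}$ is strictly negative everywhere at $T > 0$, so all intra-band contributions in the Adler-Wiser sum-over-states formula (Lemma~\ref{lem:chi_0_def}) are strictly negative. I would establish the bound at $V=0$ (where $G'(0)=\chi_{0}$) by Bloch decomposition of $\chi_{0}$ with respect to $\cR$, extracting a strict positive lower bound for $-\chi_{0}$ from the intra-band terms on long-wavelength modes, and then extend it to a small $L^{2}$-neighborhood of $V=0$ by continuity of $G'$.

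With this estimate in hand, Banach's fixed point theorem applied to $T$ on a small $L^{2}$-ball around $0$ yields existence, uniqueness, and convergence of the iteration. For the linear response statement, the same spectral estimate implies that $(-\Delta - \chi_{0})^{-1}:H^{-2}\to L^{2}$ is bounded, hence $M(0):L^{2}\to L^{2}$ is an isomorphism. Applying $\kerker$ to $\varepsilon U = \vc\rho_{\defect}$ gives $M(0)U = (1-\Delta)^{-1}\rho_{\defect}$, so
\[
\|U\|_{L^{2}} \le c^{-1}\|\rho_{\defect}\|_{H^{-2}} = c^{-1}\|V_{\defect}\|_{\vc H^{-2}},
\]
which is the continuity of $\varepsilon^{-1}:\vc H^{-2}\to L^{2}$. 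Writing $V = \varepsilon^{-1}V_{\defect} + R$ and substituting into the fixed-point equation, the $C^{1}$ estimate $G(V) = \chi_{0}V + O(\|V\|_{L^{2}}^{2})$ together with the invertibility of $M(0)$ yield $\|R\|_{L^{2}} = O(\|V_{\defect}\|_{\vc H^{-2}}^{2})$, as required.
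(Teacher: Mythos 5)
Your proposal follows essentially the same route as the paper: Kerker preconditioning via $\kerker\vc=(1-\Delta)^{-1}$ to make the map bounded on $L^{2}$, analyticity of $G$ by contour integrals, reduction of the Jacobian to a self-adjoint operator, and the crucial coercivity bound $-\Delta-\chi_{0}\geq c(1-\Delta)$ obtained from the Bloch decomposition and the strict negativity of $f_{\varepsilon_{F}}'$ at finite temperature (the paper's Lemma~\ref{lem:chi_0_def}, built on Step 4 of Lemma~\ref{lem:F_per}). The only cosmetic difference is that you conjugate by the unbounded $(1-\Delta)^{1/2}$ where the paper uses the bounded $AB$ versus $BA$ spectral identity (Lemma~\ref{lem:sigma_AB}), and you run Banach's theorem with a uniform contraction bound on a ball where the paper uses the spectral-radius variant (Theorem~\ref{thm:abstract_FP}); both variants are sound.
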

Here the space
\begin{align*}
  \vc H^{-2} = \{\vc f, f \in H^{-2}\} = \left\{f, \int_{\R^{3}}|\widehat f(q)|^{2} \frac {|q|^{4}}{(1 + |q|^{2} + |q|^{4})} dq < \infty\right\}
\end{align*}
is large enough to contain point defect
potentials of the form $V_{\defect}(x) = \frac {Q}{|x|}$. In this
case, our theorem states that when $Q$ is small enough, the screened
potential $V(V_{\defect})$ is in $L^{2}$, and therefore decays faster
than $V_{\defect}$. When the defect potential is the Coulomb potential
generated by a localized charge density $\rho$, we expect from the
analysis of the Thomas-Fermi model that $V(V_{\defect})$ will have the
same decay properties as $\rho$ (because
$q \mapsto \frac{\varepsilon_{\rm TF}^{-1}(q)}{|q|^{2}}$ is smooth).
To quantify this, we define the weighted Lebesgue and Sobolev spaces
(see Section \ref{sec:notations} for more details): for every
$n \in \R, N \geq 0$,
\begin{align*}
  L^{2}_{N} = \left\{f \in L^{2}, \int_{\R^{3}} (1+|x|^{2})^{N} |f(x)|^{2} dx < \infty\right\}
\end{align*}
and
\begin{align*}
  H^{n}_{N} = \{f, (1+|x|^{2})^{\frac N 2} f \in H^{n}\}.
\end{align*}
We then have
\begin{theorem}
  \label{thm:decay}
  Fix $W_{\per} \in L^{2}_{\per}$ and $\varepsilon_{F} \in \R$. There
  is a neighborhood $\widetilde{\mathcal V_{\defect}} \subset \mathcal
  V_{\defect}$ of zero in $\vc H^{-2}_{1}$
  such that, if $V_{\defect} \in \widetilde{\mathcal V_{\defect}}$, and if $V_{\defect} \in
  \vc H^{-2}_{N}$, then $V(V_{\defect}) \in L^{2}_{N}$.
\end{theorem}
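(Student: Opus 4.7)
The plan is to bootstrap from Theorem \ref{thm:existence_defect}, redoing the contraction argument that produces $V \in L^{2}$ inside the weighted space $L^{2}_{N}$. First I would rewrite the fixed-point equation by isolating the linear response term,
\begin{align*}
V = \varepsilon^{-1} V_\defect + \varepsilon^{-1} \vc \bigl(G(V) - \chi_{0} V\bigr) =: \Phi_{V_\defect}(V),
\end{align*}
so that the remainder $G(V) - \chi_{0} V$ is quadratic in $V$. The goal is then to show that $\Phi_{V_\defect}$ is a contraction of a small ball of $L^{2}_{N}$ containing $0$, with the fixed point necessarily coinciding with the unique $L^{2}$ solution provided by Theorem \ref{thm:existence_defect}.

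The key point that makes the weighted estimates work is that screening turns the long-range operator $\vc$ into a short-range operator $\varepsilon^{-1} \vc$. Concretely I need to establish (i) $\varepsilon^{-1}\colon \vc H^{-2}_{N} \to L^{2}_{N}$ bounded, (ii) $\varepsilon^{-1} \vc\colon H^{-2}_{N} \to L^{2}_{N}$ bounded, and (iii) a quadratic estimate of the form $\|G(V) - \chi_{0} V\|_{H^{-2}_{N}} \leq C \|V\|_{L^{2}_{N}}\|V\|_{L^{2}}$ for $V$ small in $L^{2}$. Items (i) and (ii) reduce, after conjugating by the weight $\langle x\rangle^{N}$, to proving that $\chi_{0}$ itself preserves polynomial weights, i.e.\ that $\langle x\rangle^{N} \chi_{0} \langle x\rangle^{-N}$ extends to a bounded operator on $L^{2}$. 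Using the contour-integral representation of $\chi_{0}$ from Lemma \ref{lem:chi_0_def}, this in turn reduces to iterated commutators of the form $[\langle x\rangle, (z - H_{\per})^{-1}] = (z - H_{\per})^{-1} [\langle x\rangle, -\Delta] (z - H_{\per})^{-1}$, where $[\langle x\rangle, -\Delta]$ is first-order with bounded coefficients. Combined with Combes--Thomas off-diagonal exponential decay of the periodic resolvent along the integration contour (which, at positive temperature, sits at a fixed distance from the spectrum of $H_{\per}$), this yields the required weighted boundedness.

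With these estimates in hand, the contraction argument in $L^{2}_{N}$ runs in parallel with that of Theorem \ref{thm:existence_defect}, simply with $N$-dependent constants. Non-integer values of $N$ can be handled by truncating the weight to $\min(\langle x\rangle^{N}, R)$, obtaining bounds uniform in $R$, and passing to the limit by Fatou's lemma; alternatively one can interpolate between consecutive integer weights. The main obstacle is the locality statement (i), i.e.\ the commutator estimate for $\chi_{0}$. Although morally $\chi_{0}$ is local---the Adler--Wiser integrand is analytic across the real axis at positive temperature---establishing iterated commutator bounds under the minimal regularity $W_{\per} \in L^{2}_{\per}$ of the periodic background requires a careful kernel analysis. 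The positive temperature hypothesis is essential here, since at $T = 0$ the sharp Fermi cut-off generates long-range Friedel oscillations in $\chi_{0}$ that would preclude any such polynomially-decaying bound, which is precisely why the analogous statement for insulators at zero temperature in \cite{cances2010dielectric} is qualitatively different.
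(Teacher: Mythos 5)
Your decomposition of the fixed-point equation is exactly the paper's: you isolate the screened linear response and treat $G(V)-\chi_{0}V$ as a quadratic remainder, and your items (i)--(iii) correspond to Lemma \ref{lem:chi_0_def} and to the mapping properties of $G$ established in Section \ref{sec:defect} (the paper obtains the weighted bounds not by commutators and Combes--Thomas estimates but by observing that multiplication by $\langle x\rangle^{N}$ becomes differentiation in the Bloch variable $q$, and that the fibers $R_{z,k}$ and $(-\chi_{0}+\vc^{-1})^{-1}_{k}$ are analytic in $k$ uniformly along the contour; your route is viable but, for the inverse $(-\chi_{0}+\vc^{-1})^{-1}$, conjugation by the weight does not reduce trivially to the locality of $\chi_{0}$ and needs its own resolvent-identity argument).

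The genuine gap is in how you close the argument. Running a contraction of $\Phi_{V_{\defect}}$ in a small ball of $L^{2}_{N}$ forces a smallness condition involving the $N$-dependent constants in (i)--(iii) and the $\vc H^{-2}_{N}$ norm of $V_{\defect}$, so you obtain a neighborhood of admissible defects that shrinks with $N$. The theorem asserts something stronger: the neighborhood $\widetilde{\mathcal V_{\defect}}$ lives in $\vc H^{-2}_{1}$ and is independent of $N$, which is what makes the corollary ``$V_{\defect}=Q/|x|$ with one fixed small $Q$ gives faster-than-polynomial decay'' possible. The missing idea is a bootstrap: the remainder gains decay quadratically, $G(V)-\chi_{0}V:L^{2}_{N}\to L^{2}_{2N}$ (this is the point of the change-of-variables argument distributing the $2N$ derivatives in $q$ over two factors of $V$), so one runs the contraction only once, in $L^{2}_{1}$, and then applies the identity $V=\varepsilon^{-1}V_{\defect}+\varepsilon^{-1}\vc(G(V)-\chi_{0}V)$ a finite number of times: knowing $V\in L^{2}_{N'}$, the right-hand side lies in $L^{2}_{\min(2N',N)}$ with no smallness required, only finiteness. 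Your estimate (iii), which keeps the weight on a single factor of $V$, discards precisely the gain that makes this one-shot upgrade work; strengthen it to the $L^{2}_{2N}$ version and replace the $N$-level contraction by the bootstrap, and the proof matches the theorem as stated.
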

Therefore, if $V_{\defect}(x) = \frac Q {|x|}$ for $Q$ small enough,
then $V(V_{\defect})$ decays faster than any polynomial.

To prove Theorem \ref{thm:existence_defect}, we need to generalize the results of the
Thomas-Fermi model to our setting. The first obstacle is the more
complicated nature of the potential-to-density mapping $G$. This is
handled by using a contour-integral formulation, which allows for the
computation of response functions (derivatives of $G$). The second is
the absence of translation invariance, and therefore of the simple
decomposition of operators in Fourier space. However, the periodicity of
the underlying crystal allows the use of the Bloch transform, which
replaces the Fourier transform used in the homogeneous case. We also
need to establish the invertibility of the operator
$\varepsilon \kerker$, which is done by studying the low-wavelength
behavior of the independent-particle susceptibility operator
$\chi_{0}$, and relating it to $F_{\varepsilon_{F}}'(W_{\per})$.
Finally, the improved decay estimates in Theorem \ref{thm:decay} are
obtained by considering the off-diagonal decay of the resolvent of the
periodic Hamiltonian, a property related to the well-known locality of
the density matrix \cite{prodan2005norm,benzi2013decay, cances2018numerical}.
\begin{remark}[Exponential decay]
  It follows from our estimates that the operator
  $\varepsilon^{-1} \vc$ representing the linear response of the
  screened potential to a defect charge density has an exponentially
  decaying kernel. Indeed, from the proof of Lemma~\ref{lem:chi_0_def}
  one can see that its fibers are analytic in a strip in the complex
  plane, and therefore $\varepsilon^{-1} \vc$ maps exponentially
  decaying charge densities to exponentially decaying potentials. The
  exponential decay rate depends in particular on the temperature.
  Proving this for the non-linear mapping $V(V_{\defect})$ requires
  the use of more involved functional spaces quantifying exponential
  decay, and we do not do it in this paper.
\end{remark}
\begin{remark}[Zero temperature limit]
  The results above are to be compared with those of
  \cite{cances2010dielectric} (see also \cite{cances2012mathematical}
  for the dynamical case). There, the authors study the linear
  response in the case of insulators at zero temperature. They obtain
  partial screening, whereby the total potential behaves at long range
  as a Coulombic potential whose effective charge is reduced by a
  constant factor (the dielectric constant of the material). The
  difference can be schematized as follows: in the case of insulators
  at zero temperature, the independent-particle susceptibility
  operator $\chi_{0}$ behaves for low wavelengths as $|q|^{2}$,
  reflecting the lack of bulk movement of electrons.
  Accordingly, the dielectric operator
  $\varepsilon^{-1} = (1-\vc \chi_{0})^{-1}$ behaves as a constant. In the
  finite-temperature case, $\chi_{0}$ behaves as a constant for low
  wavelengths, and therefore $\varepsilon^{-1}$ behaves as $|q|^{2}$.

  The discussion above in terms of wavelengths is complicated by the
  fact that these operators do not commute with all translations but
  only with those of the crystal lattice, and so are not diagonalized
  by the Fourier transform but by the Bloch transform. Because of
  the appearance of the inverse, the behavior of $\varepsilon^{-1}$
  for low wavelengths is not determined only by that of $\chi_{0}$ for low
  wavelengths. This discrepancy is sometimes called ``local field
  effects'' in the physical literature. However, the conclusions above
  are qualitatively correct, although the proper treatment of these
  effects is more involved, as we will see.

  This work is only concerned with the finite-temperature case.
  Physically, this has the effect of making every material metallic,
  in the sense that there are free electrons available to move towards
  the defect. Mathematically, this allows response functions to be
  derived straightforwardly from contour integrals. The case of the
  zero-temperature limit of metals remains open (although see
  \cite{frank2013positive} in the linear case). A particular challenge
  is that of the appearance of Friedel oscillations, which in the case
  of the free Fermi gas ($W_{\per} = 0$) are linked with
  non-smoothness of the independent-particle susceptibility
  $\chi_{0}(q)$. In the periodic case, the shape of Friedel
  oscillations depends on the properties of the Fermi surface.
\end{remark}

\begin{remark}[Energy methods]
  \label{rem:energy}
  In this work, we are concerned with the convergence of fixed-point
  iterations, and screening in the small defect regime. Therefore, we
  use a fixed-point approach to the existence of solutions of the
  defect equations, and do not exploit the existence of an energy.
  This limits our range of applicability to small defects, and cannot
  ensure the uniqueness of solutions. It would be interesting to prove
  the existence of solutions outside of the perturbative regime
  through energy methods.

  The use of an energy sheds some light on the convergence of the
  damped fixed-point iteration, which decreases the (free) energy of
  the system for small enough damping parameter. Similarly, the
  non-positivity of the derivative of the potential-to-density
  mapping, which we obtained by direct computation, can also be seen
  through energy methods. For concreteness, we sketch this argument
  now in a periodic system at fixed Fermi level. Consider a periodic
  system of non-interacting electrons in a periodic potential $W$.
  Define the free energy (per unit cell) of a density matrix $\gamma$
  \begin{align*}
    E(\gamma, W) = \underline{\Tr}((-\Delta + W - \varepsilon_{F}) \gamma) + k_{B}T\, \underline{\Tr}(\gamma \log \gamma + (1-\gamma) \log(1-\gamma))
  \end{align*}
  where $\underline{\Tr}$ is the trace per unit cell (see Section
  \ref{sec:notations}). Then $E$ is convex on a suitable subset of the
  convex set of periodic self-adjoint operators satisfying
  $0 \leq \gamma \leq 1$ and admits a unique minimizer
  $\gamma^{*}(W) = f_{\varepsilon_{F}}(-\Delta + W)$. The functional
  \begin{align*}
    I(W) = \inf_{\gamma} E(\gamma,W) = E(\gamma^{*}(W),W)
  \end{align*}
  is concave, being the infimum of affine functionals. Its gradient is
  computed using an Hellmann-Feynman-type argument as
  \begin{align*}
    I'(W) = \frac{\partial E}{\partial W}(\gamma^{*}(W),W) = \gamma^{*}(W)(x,x) = F_{\varepsilon_{F}}(W)
  \end{align*}
  and it follows that $F_{\varepsilon_{F}}'$, being the Hessian of a
  concave functional, is self-adjoint and non-positive.
\end{remark}

\begin{remark}[Kohn-Sham density functional theory]
  We consider here the rHF model, which neglects any
  exchange-correlation effects. In the case of the Kohn-Sham model
  under the local density approximation (LDA), the equation becomes
  $V = V_{\rm def} + \vc G(V) + V_{\rm xc}(G(V))$ where
  $V_{\rm xc}(\rho)$ is the exchange-correlation potential (the
  gradient of the exchange-correlation energy). The dielectric
  operator is then
  \begin{align*}
    \varepsilon^{-1} = (1 - (\vc+K_{\rm xc}) \chi_{0})^{-1},
  \end{align*}
  where $K_{\rm xc} = V_{\rm xc}'$. Crucially, $K_{\rm xc}$ is not in
  general a positive operator, since the exchange-correlation energy
  is not convex. It is then not a priori clear that the operator
  $1 - (\vc+K_{\rm xc}) \chi_{0}$ is invertible, even for a finite
  system. This property however holds at a non-degenerate local
  minimum of the energy \cite{cances2020convergence}. The
  investigation of screening in the Kohn-Sham model under this
  condition would be interesting future work.
\end{remark}

The structure of the paper is as follows. We first introduce our
notations in Section \ref{sec:notations} and recall properties of the Bloch
transform and of periodic operators. In Section \ref{sec:estimates} we
state general theorems and prove some estimates on resolvents and
densities of operators. Then we study the
periodic rHF model in Section \ref{sec:periodic}, establishing
properties of the response operators and proving Theorem
\ref{thm:main_periodic}. We finally study the defect
model in Section \ref{sec:defect}, culminating in the proof of Theorems
\ref{thm:existence_defect} and \ref{thm:decay}.

\section{Notations}
\label{sec:notations}
Let $\cR$ be a periodic lattice in $\R^{3}$, $\cR^{*} = \{K \in
\R^{3}|\, \forall R \in \cR, e^{iK \cdot R} = 1\}$ be its dual
lattice, $\Gamma$ be a unit cell of $\cR$, and $\Gamma^{*} = \BZ$ be a
unit cell of $\cR^{*}$. By abuse of language we call $\BZ$ the
Brillouin zone. Both $\Gamma$ and $\BZ$ are considered to have the
topology of a torus: this means that, for instance, a continuous
function on $\Gamma$ extends to a continuous and $\cR$-periodic
function on $\R^{3}$.

We let $k_{B}T > 0$ be a fixed
temperature, and set
\begin{align*}
  f_{\varepsilon_{F}}(\varepsilon) = \frac 1 {1+e^{\frac {\varepsilon-\varepsilon_{F}} {k_{B} T}}}
\end{align*}
the Fermi-Dirac occupation function. We recall that $f$ is decreasing
on $\R$ and analytic on $\R + i (-\pi k_{B} T, \pi k_{B} T)$.

$L^{2}$ is the usual Lebesgue space on $\R^{3}$,
and $L^{2}_{\per} \sim L^{2}(\Gamma)$ is the space of $\cR$-periodic
functions. For $s \in \R$, $H^{s}$ is the Sobolev space on $\R^{3}$
and $H^{s}_{\per} \sim H^{s}(\Gamma)$ the Sobolev space on the torus
$\Gamma$, defined via Fourier transform and Fourier series
respectively. All these spaces are Hilbert spaces with their usual
inner product.

We normalize the Fourier series, transforms and Bloch transforms to
consistently have un-normalized decompositions: for a function
$u \in L^{2}_{\per}$, we have
\begin{align*}
  u(x) = \sum_{K \in \cR^{*}} e^{i Kx} c_{K}(u), \quad c_{K}(u) = \fint_{\Gamma} e^{-iKx} u(x) dx
\end{align*}
where $\fint_{\Omega} = \frac 1 {\Omega}\int_{\Omega}$ is the normalized integral. For a
function $w \in L^{2}$ we have
\begin{align*}
  w(x) = \int_{\R^{3}} e^{iqx} \widehat w(q) dq , \quad \widehat w(q) = \frac 1 {(2\pi)^{3}} \int_{\R^{3}} e^{-iqx} w(x) dx
\end{align*}
The Bloch transform for $w \in L^{2}$ is
\begin{align*}
  w(x) = \int_{\BZ} e^{ikx} u_{k}(x) dk, \quad u_{k}(x) = \sum_{K \in \cR^{*}} e^{iKx} \widehat w(k+K).
\end{align*}
The map $k \mapsto u_{k}$ belongs to the space
$L^{2}(\BZ, L^{2}_{\per})$, by which we mean the space of functions
$u_{k} : \R^{3} \mapsto L^{2}_{\per}$ that are locally $L^{2}$ and
satisfy the pseudo-periodicity condition
$u_{k+K}(x) = e^{-iKx} u_{k}(x)$ for all $K \in \cR^{*}$. This space is equipped
with the norm
\begin{align*}
  \|u\|_{L^{2}(\BZ, L^{2}_{\per})}^{2} = \int_{\BZ} \|u_{k}\|_{L^{2}_{\per}}^{2}.
\end{align*}
The Bloch transform is, up to normalization, unitary from $L^{2}$ to
$L^{2}(\BZ, L^{2}_{\per})$.

Recall that $-i\partial_{x_{i}} w$ has Bloch transform
$k \mapsto (-i\partial_{x_{i}} + k_{i}) u_{k}$, and that $x_{i} w$ has
Bloch transform $k \mapsto i \partial_{k_{i}} u_{k}$. Let
$\langle x \rangle = \sqrt{1+|x|^{2}}$. For every $n, N \in \R$, let
the weighted Sobolev spaces
\begin{align*}
  H^{n}_{N} = \{f \in \mathcal S', \langle x \rangle^{N} f \in H^{n}\}
\end{align*}
and
\begin{align*}
  L^{2}_{N} = H^{0}_{N} = \{f \in L^{2}, \|\langle x\rangle^{N} f\|_{L^{2}} < \infty\}
\end{align*}
equipped with their natural inner products. Here $\mathcal S'$ is the space of tempered distributions. The Fourier transform is bounded and invertible
from $H^{n}_{N}$ to $H^{N}_{n}$. The Bloch transform is similarly
bounded and invertible from $H^{n}_{N}$ to $H^{N}(\BZ, H^{n}_{\per})$,
where $H^{N}(\BZ, H^{n}_{\per})$ is defined as above (see for instance
\cite{lechleiter2017floquet}). 

If $A$ is a bounded operator on a Banach space, we call $\|A\|$ its
norm, $\sigma(A)$ its spectrum and $r(A) =
\lim_{n\to\infty}\|A^{n}\|^{1/n} = \sup\{|z|, z \in \sigma(A)\}$ its spectral
radius.

We denote by $\Schatten_{p}$ the space of Schatten-class operators on
$L^{2}$. The spaces $\Schatten_{p}$ equipped with their norm
$\|A\|_{\Schatten_{p}} = (\Tr |A|^{p})^{1/p}$ are Banach spaces
(Hilbert space for $p=2$). In particular, the cases $p=1,2,\infty$
correspond to trace-class, Hilbert-Schmidt and bounded operators
respectively.

We say that a bounded operator $A$ on $L^{2}(\R^{3})$ is a periodic
operator if it commutes with the translations of the lattice $\cR$. As
is well-known \cite{reed1978analysis}, such operators are decomposed by the Bloch transform,
in the sense that there exists a family $\{A_{k}\}_{k \in \BZ}$ of
bounded operators on $L^{2}_{\per}$ such that, if
$w = \int_{\BZ} e^{ikx} u_{k}(x) dk \in L^{2}$, then
\begin{align*}
  (A w)(x) = \int_{\BZ} e^{ikx} (A_{k} u_{k})(x) dk.
\end{align*}
We call the operators $A_{k}$ the fibers of $A$. The smoothness of the
fibers of operators reflect the off-diagonal properties of their
kernel: if an operator $A$ has fibers $A_{k}$ that are smooth from
$\R^{3}$ to bounded operators from $H^{n}_{\per}$ to $H^{m}_{\per}$
and if $w \in H^{n}_{N}$ for some $N \geq 0$, then $A w \in H^{m}_{N}$.

If $A_{k}$ are trace-class on $L^{2}_{\per}$ almost everywhere and
$\int_{\BZ} \Tr |A_{k}| < \infty$, we define the trace per unit cell
\begin{align*}
  \Trper A = \fint_{\BZ} \Tr A_{k}.
\end{align*}
One can then define the Schatten classes of periodic operators
\begin{align*}
  \Schatten_{p,\per} = \{A \text{ periodic}, \Trper |A|^{p} < \infty\}
\end{align*}
with associated norms. Note that this is distinct from (and larger
than) the class of Schatten operators on $L^{2}_{\per}$.

If $A \in \Schatten_{1}$, then $A$ has the singular value decomposition
$A = \sum_{n \in \N} \lambda_{n} |\phi_{n}\rangle\langle \psi_{n}|$
with $\phi_{n}$ and $\psi_{n}$ two orthonormal sets and
$\sum_{n \in \N} |\lambda_{i}| < \infty$, and we define its density
$A(x,x) \in L^{1}$ by
\begin{align*}
  A(x,x) = \sum_{n \in \N} \lambda_{n} \phi_{n}(x)
\overline{\psi_{n}}(x).
\end{align*}
Similarly, if $A$ is locally trace class then
$A(x,x) \in L^{1}_{\rm loc}$, if $A$ is a trace-class operator on
$L^{2}_{\per}$ then $A(x,x) \in L^{1}_{\per}$ and if $A$ is in
$\Schatten_{1,\per}$ then $A(x,x) \in L^{1}_{\per}$, with
\begin{align*}
  A(x,x) = \fint_{\BZ} A_{k}(x,x) dk.
\end{align*}

\section{General results and estimates}
\label{sec:estimates}
\subsection{General results}
We recall the following classical properties
\begin{lemma}
  \label{lem:sigma_AB}
    Let $X$ be a Banach space and $A$, $B$ be bounded operators on
    $X$. Then $\sigma(AB) \setminus \{0\} = \sigma(BA)\setminus \{0\}$.
  \end{lemma}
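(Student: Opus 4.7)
The plan is to show the two inclusions $\sigma(AB)\setminus\{0\} \subseteq \sigma(BA)\setminus\{0\}$ and its reverse by an explicit construction of the resolvent, exploiting the obvious symmetry in $A$ and $B$. Concretely, I fix $\lambda \in \C \setminus \{0\}$ with $\lambda \notin \sigma(AB)$, write $R = (\lambda - AB)^{-1}$, and then exhibit a formula for $(\lambda - BA)^{-1}$ in terms of $R$, $A$, $B$, and $\lambda$.

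The natural guess, motivated by the formal Neumann-series identity $(\lambda - BA)^{-1} = \sum_{n \geq 0} \lambda^{-n-1}(BA)^n = \lambda^{-1}I + \lambda^{-1}B\bigl(\sum_{n\geq 0}\lambda^{-n-1}(AB)^{n}\bigr)A = \lambda^{-1}(I + BRA)$, is the operator
\begin{align*}
  S = \frac{1}{\lambda}\bigl(I + B R A\bigr).
\end{align*}
The verification is then a direct manipulation: in computing $(\lambda - BA)S$, the cross-terms combine into $\lambda^{-1}B\bigl((\lambda - AB)R - I\bigr)A$, which vanishes because $(\lambda - AB)R = I$; the same mechanism gives $S(\lambda - BA) = I$. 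This shows $\lambda \notin \sigma(BA)$, hence $\sigma(BA)\setminus\{0\} \subseteq \sigma(AB)\setminus\{0\}$. Swapping the roles of $A$ and $B$ yields the opposite inclusion.

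There is no real obstacle here, since the statement is a standard textbook result and the only thing that matters is producing and verifying the correct formula for the resolvent. The one point to mention explicitly is that the argument requires $\lambda \neq 0$, since the $\lambda^{-1}$ in the definition of $S$ is unavoidable; this is why the lemma only identifies the nonzero parts of the spectra, and indeed in general $0$ may belong to one of $\sigma(AB)$, $\sigma(BA)$ but not the other.
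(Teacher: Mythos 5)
Your proposal is correct and uses exactly the same resolvent identity $(\lambda-BA)^{-1}=\lambda^{-1}(I+B(\lambda-AB)^{-1}A)$ as the paper's proof, including the concluding symmetry argument in $A$ and $B$. The Neumann-series motivation and the explicit verification are welcome additions but the route is identical.
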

  \begin{proof}
    Let $\lambda \notin \sigma(AB)$ and $\lambda \neq 0$. Then
    $(\lambda - BA)$ is invertible with inverse
    \begin{align*}
      (\lambda-BA)^{-1} =
      \lambda^{-1} (1 + B (\lambda-AB)^{-1} A)
    \end{align*}
    and $\lambda \notin \sigma(BA)$. The proof follows by
    interchanging $A$ and $B$.
  \end{proof}
  
\begin{lemma}
  \label{lem:norm_change}
  Let $X$ be a Banach space and $A$ a bounded operator on $X$. Then
  for every $\varepsilon > 0$, there is a norm
  $\|\cdot\|_{\varepsilon}$ equivalent to $\|\cdot\|_{X}$ such that
  $\|A\|_{\varepsilon} \leq r(A) + \varepsilon$.
\end{lemma}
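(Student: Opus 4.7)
The plan is to construct the new norm explicitly from the orbit of $A$. Setting $\rho := r(A) + \varepsilon$, I would define
\[
\|x\|_{\varepsilon} := \sup_{n \geq 0} \rho^{-n} \|A^{n} x\|_{X}.
\]
The intuition is that $A/\rho$ has spectral radius strictly less than $1$, so weighting the iterates by $\rho^{-n}$ and taking a supremum should produce a norm under which $A/\rho$ is an honest contraction.

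First I would verify that $\|\cdot\|_{\varepsilon}$ is finite and equivalent to $\|\cdot\|_{X}$. The lower bound $\|x\|_{X} \leq \|x\|_{\varepsilon}$ is trivial from the $n=0$ term. For the upper bound I would invoke the Gelfand formula $r(A) = \lim_{n \to \infty} \|A^{n}\|^{1/n}$: since $\rho > r(A)$, the sequence $\rho^{-n}\|A^{n}\|$ tends to $0$, and in particular is bounded by some constant $C \geq 1$, yielding $\|x\|_{\varepsilon} \leq C \|x\|_{X}$. Subadditivity and absolute homogeneity of $\|\cdot\|_{\varepsilon}$ are immediate from the corresponding properties of $\|\cdot\|_{X}$ and the preservation of suprema.

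The key bound is then a simple re-indexing:
\[
\|A x\|_{\varepsilon} = \sup_{n \geq 0} \rho^{-n} \|A^{n+1} x\|_{X} = \rho \sup_{m \geq 1} \rho^{-m} \|A^{m} x\|_{X} \leq \rho \|x\|_{\varepsilon},
\]
hence $\|A\|_{\varepsilon} \leq \rho = r(A) + \varepsilon$. There is no substantial obstacle in this classical argument; the only delicate point is guaranteeing that the supremum defining $\|\cdot\|_{\varepsilon}$ is finite, which is exactly what Gelfand's formula provides, and this is also why the bound cannot in general be improved from $r(A) + \varepsilon$ to $r(A)$ itself.
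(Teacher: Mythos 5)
Your construction is correct and complete: the supremum norm $\|x\|_{\varepsilon}=\sup_{n\geq 0}\rho^{-n}\|A^{n}x\|_{X}$ with $\rho=r(A)+\varepsilon$ is the classical argument, and each step (finiteness via Gelfand's formula, equivalence of norms, and the re-indexing bound $\|Ax\|_{\varepsilon}\leq\rho\|x\|_{\varepsilon}$) is justified properly. The paper does not prove this lemma at all but simply cites the literature (Holmes), where essentially this same construction appears, so your proposal supplies the standard proof the paper omits.
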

\begin{proof}
  See \cite{holmes1968formula} for instance.
\end{proof}

We will make use of the following variant of the Banach fixed point theorem:
  \begin{theorem}
    \label{thm:abstract_FP}
  Let $X,Y$ be two Banach spaces, $U$ and $V$ be two neighborhoods of
  $x^{*} \in X$ and $y^{*} \in Y$, and $M : U \times V \mapsto X$ be
  a continuously differentiable mapping such that $M(x^{*},y^{*}) = x^{*}$,
  and
  \begin{align*}
    r\left(\frac{\partial M}{\partial x}(x^{*},y^{*})\right) < 1.
  \end{align*}
  Then there are neighborhoods $\widetilde U \subset U$ and
  $\widetilde V \subset V$ of $x^{*}$ and $y^{*}$ such that, for all $y \in \widetilde V$,
  the iteration
  \begin{align}
    \label{eq:abstract_FP}
    x_{n+1} = M(x_{n},y)
  \end{align}
  with $x_{0} \in \widetilde U$ converges to a solution $x(y)$ of $M(x(y),y) = x(y)$ in
  $\widetilde U$. This solution is unique in $\widetilde U$.
  Furthermore, $x(y)$ is differentiable, and
  \begin{align*}
    x'(y) = \left(1-\frac{\partial M}{\partial x}(x(y),y)\right)^{-1} \frac{\partial M}{\partial y}(x(y),y).
  \end{align*}
\end{theorem}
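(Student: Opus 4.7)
The plan is to reduce the statement to the classical Banach fixed-point theorem by first replacing the ambient norm on $X$ by an equivalent one that turns the linearization into a genuine contraction. Let $r = r\!\left(\frac{\partial M}{\partial x}(x^{*},y^{*})\right) < 1$ and apply Lemma \ref{lem:norm_change} with, say, $\varepsilon = (1-r)/4$, producing a norm $\|\cdot\|_{\varepsilon}$ equivalent to $\|\cdot\|_{X}$ in which $\|\frac{\partial M}{\partial x}(x^{*},y^{*})\|_{\varepsilon} \leq L := (1+r)/2 < 1$. By continuity of the derivative of $M$, I can choose a radius $\rho > 0$ and a neighborhood $\widetilde V \subset V$ of $y^{*}$ so that the closed ball $\widetilde U := \bar B_{\|\cdot\|_{\varepsilon}}(x^{*},\rho) \subset U$ satisfies $\|\frac{\partial M}{\partial x}(x,y)\|_{\varepsilon} \leq L'$ for all $(x,y) \in \widetilde U \times \widetilde V$, for some $L' \in (L,1)$.

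Second, I use the mean value inequality in the $\|\cdot\|_{\varepsilon}$-norm: for all $x_{1},x_{2} \in \widetilde U$ and $y \in \widetilde V$,
\begin{align*}
  \|M(x_{1},y) - M(x_{2},y)\|_{\varepsilon} \leq L' \|x_{1}-x_{2}\|_{\varepsilon},
\end{align*}
so that $M(\cdot,y)$ is $L'$-Lipschitz on $\widetilde U$. Shrinking $\widetilde V$ further if necessary, continuity of $M$ together with $M(x^{*},y^{*}) = x^{*}$ ensures $\|M(x^{*},y) - x^{*}\|_{\varepsilon} \leq (1-L')\rho$ on $\widetilde V$, hence $M(\cdot,y)$ stabilizes $\widetilde U$. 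The classical Banach fixed-point theorem then yields, for each $y \in \widetilde V$, a unique fixed point $x(y) \in \widetilde U$ to which the iteration \eqref{eq:abstract_FP} converges in $\|\cdot\|_{\varepsilon}$, hence in $\|\cdot\|_{X}$, from any starting point $x_{0} \in \widetilde U$.

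For the differentiability of $y \mapsto x(y)$, the operator $1 - \frac{\partial M}{\partial x}(x(y),y)$ is invertible on $X$: the bound $\|\frac{\partial M}{\partial x}(x(y),y)\|_{\varepsilon} \leq L' < 1$ produces a convergent Neumann series, and invertibility is independent of the choice of equivalent norm. I then apply the implicit function theorem to $F(x,y) := x - M(x,y)$ at $(x^{*},y^{*})$, noting that $\frac{\partial F}{\partial x}(x^{*},y^{*}) = 1 - \frac{\partial M}{\partial x}(x^{*},y^{*})$ is invertible by the same argument; differentiating $F(x(y),y) = 0$ gives the announced formula.

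The only nontrivial ingredient is the renorming step provided by Lemma \ref{lem:norm_change}: the spectral radius is not itself a sub-multiplicative seminorm, so the bare hypothesis $r(\partial M/\partial x(x^{*},y^{*})) < 1$ does not make $M(\cdot,y)$ a contraction in the original norm. Once this obstacle is bypassed, the remainder is routine: mean value inequality, contraction estimate on a stable ball, Banach iteration, and the implicit function theorem.
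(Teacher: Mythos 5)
Your proposal is correct and follows essentially the same route as the paper: renorm via Lemma \ref{lem:norm_change} so that the linearization becomes a genuine contraction, use continuity of the derivative to get a uniform contraction estimate on a stable ball, invoke the Banach fixed-point theorem, and obtain differentiability by the implicit-function-theorem argument. Your write-up is somewhat more detailed (explicit mean value inequality and stability of the ball), but there is no substantive difference.
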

\begin{proof}
  Applying Lemma \ref{lem:norm_change} to
  $A=\frac{\partial M}{\partial x}(x^{*},y^{*})$ and using the
  continuous differentiability of $M$, we see that, for all
  $\varepsilon > 0$, there is an equivalent norm
  $\|\cdot\|_{\varepsilon}$ on $X$ such that
  \begin{align*}
    \left\|\frac{\partial M}{\partial x}(x,y)\right\|_{\varepsilon} \leq r\left(\frac{\partial M}{\partial x}(x^{*},y^{*})\right) + \varepsilon + O(\|x-x^{*}\|+\|y-y^{*}\|)
  \end{align*}
  It follows that, for $\varepsilon$ small enough, there is a neighborhood
  $\widetilde U \times \widetilde V$ of $(x^{*},y^{*})$ such that, for
  every $y \in \widetilde V$, $M(\cdot,y)$ maps $\widetilde U$ to
  itself and is a contraction for the $\|\cdot\|_{\varepsilon}$ norm. The
  convergence of \eqref{eq:abstract_FP} (in the
  $\|\cdot\|_{\varepsilon}$ and therefore in the $\|\cdot\|$ norm), as
  well as the uniqueness of $x(y)$ follows from the Banach fixed-point
  theorem. The differentiability follows as in the proof of the
  implicit function theorem.
\end{proof}
\begin{remark}
  The implicit function theorem also shows the existence of $x(y)$
  under weaker assumptions (that
  $1-\frac{\partial M}{\partial x}$ is invertible). The main
  difference is that the implicit function theorem uses the
  Newton-like iteration
  $x_{n+1} = x_{n} + (1-\frac{\partial M}{\partial
    x}(x^{*},y^{*}))^{-1} (M(x_{n},y)-x_{n})$ instead of the simpler
  iteration \eqref{eq:abstract_FP}. We use here this version because
  we are interested in the convergence of the fixed-point iteration. 
  
  Recall that in general
  $r(\frac{\partial M}{\partial x}) < \|\frac{\partial M}{\partial
    x}\|$ for general non-normal operators, and therefore $M$ is not
  necessarily a contraction.
%   An illustrative example is
%   the iteration
%   \begin{align*}
%     x_{n+1} = A x_{n} + b
%   \end{align*}
%   where $A =
%   \begin{pmatrix}
%     0.9 & 0\\
%     100 & 0.9
%   \end{pmatrix}
% $ and $b \in \R^{2}$. This iteration converges ($r(A) = 0.9
% < 1$) but is not contracting ($\|A\| \approx 100$). Accordingly, the error $\|x_{n} -
% x^{*}\|$ displays a long transient period of growth before eventually converging
% to $0$.
\end{remark}

\subsection{Resolvent estimates}
In the following, we want to prove that products of resolvents and
potentials have certain trace properties, in order to
define potentials-to-density mappings via contour integrals. The
following equality, a building block of the general Kato-Seiler-Simon
inequality \cite{simon2010trace}, will be very useful:
\begin{lemma}[Kato-Seiler-Simon equality]
  For every $f \in L^{2}$, $g \in L^{2}$, $f(-i\nabla)g(x) \in
  \Schatten_{2}$ and
  \begin{align*}
    \|f(-i \nabla) g(x)\|_{\Schatten_{2}} = (2\pi)^{-3/2}\|f\|_{L^{2}} \|g\|_{L^{2}}.
  \end{align*}

  Similarly, for $f \in L^{2}, g \in L^{2}_{\per}$,
  \begin{align*}
    \|f(-i \nabla) g(x)\|_{\Schatten_{2,\per}} = (2\pi)^{-3/2} \|f\|_{L^{2}} \|g\|_{L^{2}_{\per}}.
  \end{align*}
\end{lemma}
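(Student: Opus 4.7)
The plan is to compute the integral kernel of $f(-i\nabla)g(x)$ explicitly and then recognize its Hilbert--Schmidt norm as an $L^{2}$ norm that Plancherel identifies with $\|f\|_{L^{2}}\|g\|_{L^{2}}$ up to normalization. For the first (whole-space) statement, I first observe that if $\phi \in \mathcal{S}$, then
\begin{align*}
 (f(-i\nabla)g\,\phi)(x) = \int_{\R^{3}} e^{iqx} f(q) \widehat{g\phi}(q)\, dq = \int_{\R^{3}} K(x,y)\phi(y)\, dy
\end{align*}
with $K(x,y) = F(x-y) g(y)$, where $F(z) = \frac{1}{(2\pi)^{3}} \int_{\R^{3}} e^{iqz} f(q)\, dq$. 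Thus, with the paper's Fourier normalization, $F(z) = \widehat{f}(-z)$ and, by Plancherel, $\|F\|_{L^{2}}^{2} = (2\pi)^{-3} \|f\|_{L^{2}}^{2}$. Since the Hilbert--Schmidt norm of an operator equals the $L^{2}(\R^{3}\times\R^{3})$ norm of its kernel, a direct Fubini computation gives
\begin{align*}
 \|f(-i\nabla) g(x)\|_{\Schatten_{2}}^{2} = \int\!\!\int |F(x-y)|^{2} |g(y)|^{2}\, dx\, dy = \|F\|_{L^{2}}^{2} \|g\|_{L^{2}}^{2} = (2\pi)^{-3} \|f\|_{L^{2}}^{2} \|g\|_{L^{2}}^{2}.
\end{align*}
For $f,g$ merely in $L^{2}$, one justifies the kernel computation by approximating with Schwartz functions.

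For the periodic statement, the multiplication by the $\cR$-periodic function $g$ and the Fourier multiplier $f(-i\nabla)$ both commute with lattice translations, so the product is a periodic operator. Its fibers are obtained by replacing $-i\nabla$ with $-i\nabla + k$ on $L^{2}_{\per}$, acting on $u_{k}$, and multiplication by $g$ remains unchanged. I would then compute $\|A_{k}\|_{\Schatten_{2}}^{2}$ on $L^{2}_{\per}$ using the plane-wave basis $e_{K}(x) = |\Gamma|^{-1/2} e^{iKx}$, $K \in \cR^{*}$: since $f(-i\nabla + k) e^{i(K+K')x} = f(k+K+K') e^{i(K+K')x}$ and $g e^{iKx} = \sum_{K'} c_{K'}(g) e^{i(K+K')x}$, summing $\|A_{k} e_{K}\|^{2}$ over $K$ yields, by orthogonality of plane waves,
\begin{align*}
 \|A_{k}\|_{\Schatten_{2}}^{2} = \Big(\sum_{K'} |c_{K'}(g)|^{2}\Big) \Big(\sum_{K} |f(K+k)|^{2}\Big).
\end{align*}

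Averaging over the Brillouin zone,
\begin{align*}
 \|f(-i\nabla) g(x)\|_{\Schatten_{2,\per}}^{2} = \fint_{\BZ} \|A_{k}\|_{\Schatten_{2}}^{2}\, dk = \frac{\|g\|_{L^{2}_{\per}}^{2}}{|\Gamma|} \cdot \frac{1}{|\BZ|} \int_{\R^{3}} |f(q)|^{2}\, dq,
\end{align*}
using Parseval on $\Gamma$ (which gives $\sum_{K'}|c_{K'}(g)|^{2} = \|g\|_{L^{2}_{\per}}^{2}/|\Gamma|$) and the tiling $\R^{3} = \bigsqcup_{K \in \cR^{*}}(\BZ + K)$. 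The claim then follows from the identity $|\Gamma||\BZ| = (2\pi)^{3}$. The computation is entirely routine; the only real care needed is bookkeeping for the Fourier normalizations adopted in the paper (un-normalized synthesis, $(2\pi)^{-3}$ analysis) and for the $|\Gamma|$-dependent Parseval formula on the torus, which together are precisely what produces the factor $(2\pi)^{-3/2}$.
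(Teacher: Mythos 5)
Your proof is correct and takes essentially the same route as the paper: the whole-space case via the explicit kernel $\check f(x-y)g(y)$ and Plancherel, and the periodic case by computing the fibers' Hilbert--Schmidt norms in the plane-wave basis of $L^{2}_{\per}$ and then averaging over the Brillouin zone, with the same normalization bookkeeping $|\Gamma|\,|\BZ|=(2\pi)^{3}$.
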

\begin{proof}
  The proof of the first assertion is standard, see e.g.
  \cite{simon2010trace}: note that $f(-i\nabla) g(x)$ has integral
  kernel
  \begin{align*}
    (f(-i\nabla) g(x))(x,y) = \check{f}(x-y)g(y)
  \end{align*}
  and therefore
  \begin{align*}
    \|f(-i\nabla) g(x)\|_{\Schatten_{2}}^{2} = \int_{\R^{6}}|\check{f}(x-y)g(y)|^{2} dx dy = (2\pi)^{-3}\|f\|_{L^{2}} \|g\|_{L^{2}}
  \end{align*}
  For the second, we first note that, if $f \in \ell^{2}(\cR^{*})$ and
  $g \in L^{2}_{\per}$, then $f(-i\nabla) g(x)$ is an operator on
  $L^{2}_{\per}$. By writing its kernel (matrix in the
  basis of the $\frac 1 {\sqrt{|\Gamma|}}e^{iKx}$ for $K \in \cR^{*}$),
  we get
  \begin{align*}
    \|f(-i\nabla) g(x)\|_{\Schatten_{2}(L^{2}_{\per})} = \frac{1}{\sqrt{|\Gamma|}} \|f\|_{\ell^{2}(\cR^{*})} \|g\|_{L^{2}_{\per}}
  \end{align*}
  (see \cite{gontier2016supercell}). The result then follows by
  writing, for $f \in L^{2}, g \in L^{2}_{\per}$,
  \begin{align*}
    \|f(-i\nabla) g(x)\|_{\Schatten_{2,\per}}^{2} &= \fint_{\BZ} \|f(-i\nabla+k) g(x)\|_{\Schatten_{2}(L^{2}_{\per})}^{2} dk\\
    &= \frac 1 {|\Gamma|} \|g\|_{L^{2}_{\per}}^{2}\fint_{\BZ} \sum_{K \in \cR^{*}} |f(k+K)|^{2} dk= (2\pi)^{-3} \|g\|_{L^{2}_{\per}}^{2}\|f\|_{L^{2}}^{2}
  \end{align*}
\end{proof}

In particular, since $f(q) = (1+|q|^{2})^{-1}$ is in $L^{2}$, this
implies that $(1-\Delta)^{-1} V$ is bounded for
$V \in L^{2} + L^{2}_{\per}$. This can be amplified to prove that both
$L^{2}$ and $L^{2}_{\per}$ potentials are $-\Delta$-bounded with
relative bound zero, so that, for $V \in L^{2} + L^{2}_{\per}$,
$-\Delta + V$ is self-adjoint on $L^{2}(\R^{3})$ with domain
$H^{2}(\R^{3})$. In particular, the resolvent of $-\Delta + V$ is the
resolvent of the Laplacian, modulo a bounded operator:
\begin{lemma}
  \label{lemma:res-inv-lap}
  There is $C > 0$ such that, for all $V \in L^{2} + L^{2}_{\per}$, if
  $z \not \in \sigma(-\Delta + V)$, then
  \begin{align}
    B_{z} = (z-H)^{-1} (1-\Delta)
  \end{align}
is bounded, with
  \begin{align*}
    \|B_{z}\| \leq C \left( 1+\frac{1+|z|+ \|V\|_{L^{2}+L^{2}_{\per}}^{4}}{d(z,\sigma(H))} \right).% \qquad \|B_{z}^{-1}\| \leq C(1 + |z| + \|V\|_{L^{2} + L^{2}_{\per}})
  \end{align*}
\end{lemma}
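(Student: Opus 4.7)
The overall strategy is to first bound $B_{z_0}$ at a carefully chosen $z_0 = -\mu$ for which the factorization $H + \mu = (-\Delta+\mu)(1 + (-\Delta+\mu)^{-1}V)$ can be inverted by Neumann series, and then propagate the bound to general $z$ via the resolvent identity. The $\|V\|^4$ factor in the stated estimate traces back to the required size $\mu \asymp 1 + \|V\|_{L^{2}+L^{2}_{\per}}^{4}$.

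The key quantitative input is the bound $\|(-\Delta + \mu)^{-1}V\| \leq C\mu^{-1/4}\|V\|_{L^{2}+L^{2}_{\per}}$, valid for $\mu \geq 1$. Splitting $V = V_{1}+V_{2}$ with $V_{1} \in L^{2}$ and $V_{2} \in L^{2}_{\per}$, the Kato-Seiler-Simon equality applied to $f(q) = (|q|^{2}+\mu)^{-1}$ gives directly $\|(-\Delta+\mu)^{-1}V_{1}\|_{\Schatten_{2}} \leq C\mu^{-1/4}\|V_{1}\|_{L^{2}}$, using $\int_{\R^{3}}(|q|^{2}+\mu)^{-2}dq = O(\mu^{-1/2})$. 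For the periodic part, $(-\Delta+\mu)^{-1}V_{2}$ is a periodic operator on $L^{2}(\R^{3})$, so its norm equals the supremum over $k \in \BZ$ of the norms of its fibers $((-i\nabla+k)^{2}+\mu)^{-1}V_{2}$ on $L^{2}_{\per}$; each fiber is controlled by the $\ell^{2}(\cR^{*})$ version of Kato-Seiler-Simon together with the elementary estimate $\sum_{K \in \cR^{*}}(|k+K|^{2}+\mu)^{-2} \leq C\mu^{-1/2}$ uniformly in $k$. Fixing $\mu = \max(2, (2C)^{4}\|V\|_{L^{2}+L^{2}_{\per}}^{4})$, we obtain $\|(-\Delta+\mu)^{-1}V\| \leq 1/2$, so that $1 + (-\Delta+\mu)^{-1}V$ is invertible by Neumann series with inverse of norm at most $2$.

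For this choice of $\mu$, the factorization $H + \mu = (-\Delta+\mu)(1 + (-\Delta+\mu)^{-1}V)$ (valid on $H^{2}$ because $V$ is $-\Delta$-bounded) shows that $H + \mu$ is invertible with $\|(H+\mu)^{-1}\| \leq 2/\mu$, and in particular $-\mu \notin \sigma(H)$. Decomposing $1 - \Delta = (1-\mu) + (-\Delta+\mu)$ and noting that $(H+\mu)^{-1}(-\Delta+\mu) = (1 + (-\Delta+\mu)^{-1}V)^{-1}$ follows from the factorization, we get
\begin{align*}
  B_{-\mu} = -(H+\mu)^{-1}(1-\Delta) = (\mu-1)(H+\mu)^{-1} - (1 + (-\Delta+\mu)^{-1}V)^{-1},
\end{align*}
which is bounded by an absolute constant (at most $4$), independent of $V$ and $z$.

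Finally, the resolvent identity $(z-H)^{-1} = (-\mu-H)^{-1} + (z+\mu)(z-H)^{-1}(-\mu-H)^{-1}$ multiplied on the right by $(1-\Delta)$ yields $B_{z} = B_{-\mu} + (z+\mu)(z-H)^{-1}B_{-\mu}$, whence
\begin{align*}
  \|B_{z}\| \leq \|B_{-\mu}\|\left(1 + \frac{|z+\mu|}{d(z,\sigma(H))}\right),
\end{align*}
and since $|z+\mu| \leq |z| + \mu \leq C(1 + |z| + \|V\|_{L^{2}+L^{2}_{\per}}^{4})$, the desired estimate follows. The main technical obstacle is the uniform-in-$k$ periodic Kato-Seiler-Simon estimate, which is where the mixed $L^{2}+L^{2}_{\per}$ nature of the potential genuinely enters; once that is in hand, the remaining manipulations are routine operator algebra.
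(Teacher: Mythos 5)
Your proof is correct and follows essentially the same strategy as the paper's: both hinge on the Kato--Seiler--Simon bound $\|(-\Delta+\mu)^{-1}V\|\leq C\mu^{-1/4}\|V\|_{L^{2}+L^{2}_{\per}}$ (applied separately to the $L^{2}$ and periodic parts, the latter fiberwise) to invert a shifted Hamiltonian by Neumann series with shift of size $\asymp 1+\|V\|^{4}$, and then transfer the bound back to general $z$. The only, immaterial, difference is that the paper shifts by $ic$ with $c\asymp 1+|z|+\|V\|^{4}$ and compares via $\sup_{\lambda}|z-\lambda-ic|/|z-\lambda|$, whereas you shift to a fixed real point $-\mu$ and use the first resolvent identity (in which you have a harmless sign slip: the last term should carry a minus sign, which does not affect the norm estimate).
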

\begin{proof}
  In this proof and others in the sequel, $C$ denotes a constant whose
  value might change from line to line.
  
  The following argument is classical, see e.g. \cite[Lemma
  1]{cances2008new}. The idea of the proof is that, if $V$ is small,
  we can expand
  $(z-H)^{-1}(1-\Delta) = \sum_{n \geq 0}
  ((z+\Delta)^{-1}V)^{n}(z+\Delta)^{-1}(1-\Delta)$ and bound
  $(z+\Delta)^{-1}V$ by the Kato-Seiler-Simon equality. To extend
  this argument for arbitrary large sizes of $V$, we consider the
  shifted operator $H+ic$, where $c > 0$.

  Let $c > 0$. For $V \in L^{2}$, we have that
  \begin{align*}
    \|(- \Delta - ic)^{-1} V\| \leq \|(- \Delta - ic)^{-1} V\|_{\Schatten_{2}} \leq \|V\|_{L^{2}} \|(|\xi|^{2} - ic)^{-1}\|_{L^{2}} \leq c^{-1/4}\|V\|_{L^{2}} \|(|\xi|^{2} - i)^{-1}\|_{L^{2}}
  \end{align*}
  while similarly for $W \in L^{2}_{\per}$ we have
  \begin{align*}
    \|(- \Delta - ic)^{-1} W\| \leq \|(- \Delta - ic)^{-1} W\|_{\Schatten_{2,\per}} \leq \|W\|_{L^{2}_{\per}} \|(|K|^{2} - ic)^{-1}\|_{\ell^{2}} \leq c^{-1/4}\|W\|_{L^{2}_{\per}} \|(|K|^{2} - i)^{-1}\|_{\ell^{2}}
  \end{align*}
  
  It follows that by taking
  $c = C (1+|z|+ \|V\|_{L^{2}+L^{2}_{\per}}^{4})$ with $C$ large
  enough, we get
  \begin{align*}
    (z-(H+ic))^{-1} = ((-\Delta-ic) +z- V)^{-1} = (1 + (-\Delta-ic)^{-1}(z-V))^{-1} (-\Delta-ic)^{-1}
  \end{align*}
  and so $(z-(H+ic))^{-1}(1-\Delta)$ is bounded uniformly in $V$ and
  $z$. The result then follows from
  \begin{align*}
    B_{z} &= (z-H)^{-1}(z-(H+ic)) (z-(H+ic))^{-1} (1-\Delta)\\
    \|B_{z}\| &\leq C  \sup_{\lambda \in \R} \frac{|z-(\lambda+ic)|}{z - \lambda} \leq C\left(1 + \frac c {d(z,\sigma(H))}\right)
  \end{align*}
\end{proof}

\subsection{Density of an operator}
\label{sec:def_density}
The following lemma gives a useful condition for an operator to have a
density in $L^{2}$, or for a periodic operator to have a density in $L^{2}_{\per}$.
\begin{lemma}
  \label{lem:def_density}
  There is $C > 0$ such that, if $A$ is an operator such that $A(1-\Delta) \in
  \Schatten_{2}$, then $A(x,x) \in L^{2}$, with
  \begin{align*}
    \|A(x,x)\|_{L^{2}} \leq C \|A(1-\Delta)\|_{\Schatten_{2}}
  \end{align*}
  Similarly, if $A_{\per}$ is a periodic operator such that
  $A_{\per}(1-\Delta) \in \Schatten_{2,\per}$, then $A_{\per}(x,x) \in
  L^{2}_{\per}$, with
  \begin{align*}
    \|A_{\per}(x,x)\|_{L^{2}_{\per}} \leq C \|A_{\per}(1-\Delta)\|_{\Schatten_{2,\per}}.
  \end{align*}
\end{lemma}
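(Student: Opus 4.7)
The approach is to define and bound $A(x,x)$ by duality. Setting $B = A(1-\Delta) \in \Schatten_2$ so that $A = B(1-\Delta)^{-1}$, I would first observe that $A$ is locally trace class: for any $\chi \in C_c^\infty(\R^3)$, the operator $\chi A\chi = (\chi B)\cdot((1-\Delta)^{-1}\chi)$ is a product of two Hilbert-Schmidt operators (the second by the Kato-Seiler-Simon equality applied to $f(q) = (1+|q|^2)^{-1} \in L^2(\R^3)$), hence trace class. By the definition recalled just before the lemma, this yields $A(x,x) \in L^1_{\rm loc}$.

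The main step is then the estimate. For $\eta \in L^2$, Kato-Seiler-Simon gives $(1-\Delta)^{-1}\bar\eta \in \Schatten_2$ with $\|(1-\Delta)^{-1}\bar\eta\|_{\Schatten_2}\leq C\|\eta\|_{L^2}$, so $B(1-\Delta)^{-1}\bar\eta \in \Schatten_1$ and
\begin{align*}
  |\Tr(B(1-\Delta)^{-1}\bar\eta)| \leq \|B\|_{\Schatten_2}\|(1-\Delta)^{-1}\bar\eta\|_{\Schatten_2} \leq C\|A(1-\Delta)\|_{\Schatten_2}\|\eta\|_{L^2}.
\end{align*}
For $\eta \in C_c^\infty$, I would pick $\chi \in C_c^\infty$ equal to 1 on the support of $\eta$ and use cyclicity of the trace for $\Schatten_2$ operators to rewrite
\begin{align*}
  \Tr(B(1-\Delta)^{-1}\bar\eta) = \Tr(\bar\eta \cdot \chi A \chi) = \int \bar\eta(x) A(x,x) dx,
\end{align*}
the last identity being the standard trace-class formula for $\chi A \chi$. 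Combining with the estimate, the antilinear functional $\eta \mapsto \int \bar\eta(x) A(x,x) dx$ defined on $C_c^\infty$ extends by density to a bounded antilinear functional on $L^2$, and Riesz representation produces $A(x,x) \in L^2$ with the claimed norm bound.

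The periodic case proceeds in parallel: for $A_\per$ with $A_\per(1-\Delta) \in \Schatten_{2,\per}$, the periodic Kato-Seiler-Simon equality gives $\|(1-\Delta)^{-1}\bar\eta\|_{\Schatten_{2,\per}} \leq C\|\eta\|_{L^2_\per}$ for $\eta \in L^2_\per$, and the fiberwise Cauchy-Schwarz inequality $|\Trper(XY)|\leq \|X\|_{\Schatten_{2,\per}}\|Y\|_{\Schatten_{2,\per}}$ provides the analogue of the trace bound. The main subtlety in both cases is identifying the $\Schatten_1$ trace with the integral against $A(x,x)$, which is the whole point of first establishing the locally trace-class property and localizing via the cutoff $\chi$; once this is done the rest is pure duality.
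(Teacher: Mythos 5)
Your proof is correct and takes essentially the same route as the paper's: a duality argument testing $A(x,x)$ against $f\in L^{2}$, writing the pairing as $\Tr(Af)$, factoring $Af=\bigl(A(1-\Delta)\bigr)\bigl((1-\Delta)^{-1}f\bigr)$ and invoking the Kato--Seiler--Simon equality. The only difference is that you carefully justify the identification $\int A(x,x)\bar\eta = \Tr(A\bar\eta)$ via localization and the locally trace-class property, a step the paper takes for granted.
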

\begin{proof}
  For any function $f \in L^{2}$,
  \begin{align*}
    \int_{\R^{3}} A(x,x) f(x) dx = \Tr(Af) \le \|A f\|_{\Schatten_{1}} \leq \|A(1-\Delta)\|_{\Schatten_{2}} \|(1-\Delta)^{-1}f\|_{\Schatten_{2}} \leq C \|A(1-\Delta)\|_{\Schatten_{2}} \|f\|_{L^{2}},
  \end{align*}
  where $f$ above is interpreted as a multiplication operator. The proof is similar in
  the periodic case.
\end{proof}
\section{The periodic finite-temperature rHF model}
\label{sec:periodic}
Given a nuclear potential $W_{\nucl} \in L^{2}_{\per}$, we look for a solution of
the equations
\begin{align}
  \label{eq:RHF_per}
  \begin{cases}
    W = W_{\nucl} + \vcper F_{\varepsilon_{F}}(W)\\
    \int_{\Gamma} F_{\varepsilon_{F}}(W) = N_{\rm el}.
  \end{cases}
\end{align}
Recall that the existence and uniqueness of solutions of this equation
have been proved in \cite{nier1993variational}. Our goal for this
section is Theorem \ref{thm:main_periodic}, which states the local
convergence of a fixed-point iteration.

For any $\rho \in L^{2}_{\per}$, $\vcper \rho$ was defined in \eqref{eq:vcper} as the
solution of the periodic Poisson equation with zero mean:
\begin{align*}
  (\vcper \rho)(x) = \sum_{K \in \cR^{*}, K \neq 0} \frac{c_{K}(\rho)}{|K|^{2}} e^{iKx}.
\end{align*}
It is a bounded non-negative self-adjoint operator on $L^{2}_{\per}$.
It is the pseudo-inverse of the negative Laplacian on $L^{2}_{\per}$,
in the sense that $-\Delta (\vcper \rho) = \rho$ for all
$\rho \in L^{2}_{\per}$ with $\rho \perp e$, and $\vcper e = 0$, where
the constant function $e(x) = 1$ spans the kernel of $-\Delta$.

We first investigate the mapping $F_{\varepsilon_{F}}$ and its derivative. The last
property that $F'_{\varepsilon_{F}}(W) + \beta \Delta$ is positive
for all $\beta > 0$ is recorded for future use in the case of defects.
\begin{lemma}
  \label{lem:F_per}
  For all $\varepsilon_{F} \in \R$, the map
  \begin{align*}
    F_{\varepsilon_{F}}(W) = f_{\varepsilon_{F}}\left(-\Delta + W\right)(x,x)
  \end{align*}
  is analytic from $L^{2}_{\per}$ to itself. For all
  $W \in L^{2}_{\per}$, its differential $F_{\varepsilon_{F}}'(W)$ is
  self-adjoint and non-positive. Furthermore, for every
  $\beta > 0$, $ F'_{\varepsilon_{F}}(W) + \beta \Delta$ is
  negative.
\end{lemma}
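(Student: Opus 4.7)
The plan is to base everything on the Cauchy contour representation
\[
f_{\varepsilon_{F}}(H) = \frac{1}{2\pi i} \oint_{\cC} f_{\varepsilon_{F}}(z)\, (z-H)^{-1}\, dz,
\]
with $H = -\Delta + W$ and $\cC$ a closed contour inside the analyticity strip $\R + i(-\pi k_{B} T, \pi k_{B} T)$ that encloses $\sigma(H)$ and stays a positive distance from it. Such a contour exists: since $L^{2}_{\per}$-potentials are $-\Delta$-bounded with relative bound zero, $H$ is bounded below with a uniformly controlled bottom, and the exponential decay of $f_{\varepsilon_{F}}$ on horizontal lines inside the strip lets us close the contour on the right. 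This representation converts the analyticity, self-adjointness, and monotonicity claims into algebraic manipulations of resolvents.

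First I would prove analyticity. Expanding the perturbed resolvent in a Dyson series
\[
(z-H-\eta W')^{-1} = \sum_{n\ge 0} \eta^{n}\, (z-H)^{-1}\bigl(W'(z-H)^{-1}\bigr)^{n}
\]
gives a power series in $\eta$ for $f_{\varepsilon_{F}}(H+\eta W')$. To control each term I would write $(z-H)^{-1} = B_{z}(1-\Delta)^{-1}$ with $B_{z}$ bounded uniformly on $\cC$ by Lemma~\ref{lemma:res-inv-lap}, and invoke the Kato-Seiler-Simon equality to place $(1-\Delta)^{-1}W'$ in $\Schatten_{2,\per}$ with norm controlled by $C\|W'\|_{L^{2}_{\per}}$. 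Iterating, $(z-H-\eta W')^{-1}(1-\Delta) \in \Schatten_{2,\per}$ with a norm that is analytic in $\eta$ and locally bounded uniformly in $z \in \cC$; Lemma~\ref{lem:def_density} then lifts this to analyticity of the density in $L^{2}_{\per}$, which simultaneously shows that $F_{\varepsilon_{F}}'(W)$ is bounded on $L^{2}_{\per}$.

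Next I would extract the derivative explicitly. Isolating the first-order Dyson term and computing residues on the Bloch-diagonalized resolvent, with eigenpairs $(\varepsilon_{n,k}, u_{n,k})$ of the fibers $H_{k} = (-i\nabla+k)^{2}+W$ and using that periodic $W'$ decomposes fiber-by-fiber, yields the sum-over-states formula
\[
\bigl(F_{\varepsilon_{F}}'(W)W'\bigr)(x) = \fint_{\BZ} \sum_{n,m}\frac{f_{\varepsilon_{F}}(\varepsilon_{n,k})-f_{\varepsilon_{F}}(\varepsilon_{m,k})}{\varepsilon_{n,k}-\varepsilon_{m,k}}\, \langle u_{m,k}, W' u_{n,k}\rangle\, u_{n,k}(x)\overline{u_{m,k}(x)}\, dk,
\]
with the quotient read as $f_{\varepsilon_{F}}'(\varepsilon_{n,k})$ at coinciding eigenvalues. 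Pairing this with a test $W''$ produces a bilinear form symmetric in $W',W''$ after swapping $n$ and $m$ (using symmetry of the divided difference), so $F_{\varepsilon_{F}}'(W)$ is self-adjoint on $L^{2}_{\per}$; non-positivity then follows because $f_{\varepsilon_{F}}$ is strictly decreasing, making every divided difference $\le 0$, and hence $\langle W', F_{\varepsilon_{F}}'(W)W'\rangle_{L^{2}_{\per}} \le 0$.

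For the strict negativity of $F_{\varepsilon_{F}}'(W) + \beta\Delta$, the quadratic form of $\beta\Delta$ on $L^{2}_{\per}$ equals $-\beta\|\nabla W'\|^{2}_{L^{2}_{\per}}$, which is $\le 0$ and vanishes only when $W'$ is a constant $c$. In that case orthonormality of $\{u_{n,k}\}_{n}$ within each fiber collapses the sum-over-states to $|c|^{2}\fint_{\BZ}\sum_{n} f_{\varepsilon_{F}}'(\varepsilon_{n,k})\, dk$, which is strictly negative because $f_{\varepsilon_{F}}' < 0$ pointwise and the series converges by the super-exponential decay of $f_{\varepsilon_{F}}'$ at large energy combined with Weyl-type bounds on the number of Bloch eigenvalues below a given level. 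Hence the combined quadratic form is strictly negative on $L^{2}_{\per}\setminus\{0\}$. The main technical hurdle is the uniform-in-$z$ Schatten control that makes the Dyson expansion converge in the density norm rather than only weakly; Lemmas~\ref{lemma:res-inv-lap} and~\ref{lem:def_density} together with Kato-Seiler-Simon do exactly this job, after which the residue computation and sign analysis are routine.
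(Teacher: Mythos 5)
Your overall strategy (contour representation, Dyson expansion controlled by Lemma~\ref{lemma:res-inv-lap} and Kato--Seiler--Simon, sum-over-states formula for the derivative) is the same as the paper's, but there are two genuine gaps. The first is the contour. You take $\cC$ inside the analyticity strip $\R+i(-\pi k_BT,\pi k_BT)$ and close it on the right along horizontal lines, then assert that $B_z=(z-H)^{-1}(1-\Delta)$ is uniformly bounded on $\cC$. That assertion fails for such a contour: on a horizontal line $\Im z=c$ one has $d(z,\sigma(H))\leq c$ once $\Re z$ lies in the (band) spectrum, so the bound of Lemma~\ref{lemma:res-inv-lap} gives $\|B_z\|\gtrsim |z|/c\to\infty$. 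Consequently $\|R_z\,\delta W\|_{\Schatten_{2,\per}}$ is not uniformly small on $\cC$ for a fixed small $\delta W$, and the Neumann series $(1-R_z\delta W)^{-1}=\sum_n(R_z\delta W)^n$ has no common radius of convergence along the contour; the exponential decay of $f_{\varepsilon_F}$ on horizontal lines does not rescue this, because the problem is the convergence in $\delta W$, not the integrability in $z$. This is precisely the point the paper flags: its contour (Figure~\ref{fig:cont}) leaves the strip between the poles $\varepsilon_F+i\pi k_BT(2\Z+1)$ and is asymptotic to $\Im z=a\pm b\Re z$, so that $d(z,\sigma(H))$ grows like $\Re z$ and $\|B_z\|$ is uniformly bounded. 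A related smaller omission: your Hilbert--Schmidt machinery only controls the terms $n\geq 1$ of the expansion; that the zeroth-order density $f_{\varepsilon_F}(H)(x,x)$ itself lies in $L^2_{\per}$ (i.e.\ that $F_{\varepsilon_F}$ is well defined) is not given by Lemma~\ref{lem:def_density} applied to $\oint f(z)B_z\,dz$, since $B_z$ is merely bounded; the paper handles it separately by an eigenfunction expansion with the Weyl bound $\varepsilon_{nk}\geq a+bn^{2/3}$.

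The second gap is in the last assertion. You show that $\langle W',(F_{\varepsilon_F}'(W)+\beta\Delta)W'\rangle<0$ for every nonzero $W'$, by noting that the form vanishes only if $W'$ is constant and that constants give a strictly negative value. But ``negative'' here must be read uniformly: the paper proves that the form is bounded away from zero on the unit sphere of $L^2_{\per}$, and this uniform gap is what is actually used later (in Lemma~\ref{lem:chi_0_def}, to get $-\chi_{0,k}+\tfrac12\vck^{-1}\geq c>0$ for small $k$ by perturbation). In infinite dimensions pointwise strict negativity does not imply a uniform bound, so your argument proves a strictly weaker statement. The paper closes this by a compactness argument: if $\|W_n\|_{L^2_{\per}}=1$ and the form tends to $0$, then $\|\nabla W_n\|\to 0$ forces $W_n\to ce$ with $|c|=|\Gamma|^{-1/2}$, and continuity of the form along this sequence yields the strictly negative limit $|\Gamma|^{-1/2}\fint_{\BZ}f_{\varepsilon_F}'(\varepsilon_{1k})\,dk<0$, a contradiction. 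You would need to add this (or an equivalent uniform estimate) to complete the proof.
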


\begin{proof}
  \textbf{Step 1:
    $F_{\varepsilon_{F}} : L^{2}_{\per} \to L^{2}_{\per}$.} Let
  $W \in L^{2}_{\per}$, and $H = -\Delta + W$. Recall that $H$ is
  periodic, with fibers $H_{k} = (-i\nabla + k)^{2} + W$. We label the
  eigenvectors and eigenvalues of $H_{k}$ (a self-adjoint operator on
  $L^{2}_{\per}$ with compact resolvent) by
  \begin{align*}
    H_{k} u_{nk} = \varepsilon_{nk} u_{nk}
  \end{align*}
  where the $(\varepsilon_{nk})_{n \in \N}$ are ordered by increasing
  order. We have
  \begin{align*}
    F_{\varepsilon_{F}}(W) = \fint_{\BZ} \sum_{n \in \N} f_{\varepsilon_{F}}(\varepsilon_{nk}) |u_{nk}|^{2} dk
  \end{align*}
  By standard comparison arguments, there are $a \in \R, b > 0$ such
  that $\varepsilon_{nk} \geq a+b n^{2/3}$. By the Sobolev embedding
  $H^{1}_{\per}\hookrightarrow L^{4}_{\per}$, $|u_{nk}|^{2}$ is
  controlled in $L^{2}_{\per}$ by
  $\|u_{nk}\|_{H^{1}_{\per}}^{2} \leq C(1+ n^{2/3})$ for some $C > 0$,
  uniformly in $k \in \BZ$, and it follows from the exponential decay
  of $f_{\varepsilon_{F}}$ that
  $F_{\varepsilon_{F}}(W) \in L^{2}_{\per}$.

  \begin{figure}[h!]
    \centering
    \includegraphics[width=.5\textwidth]{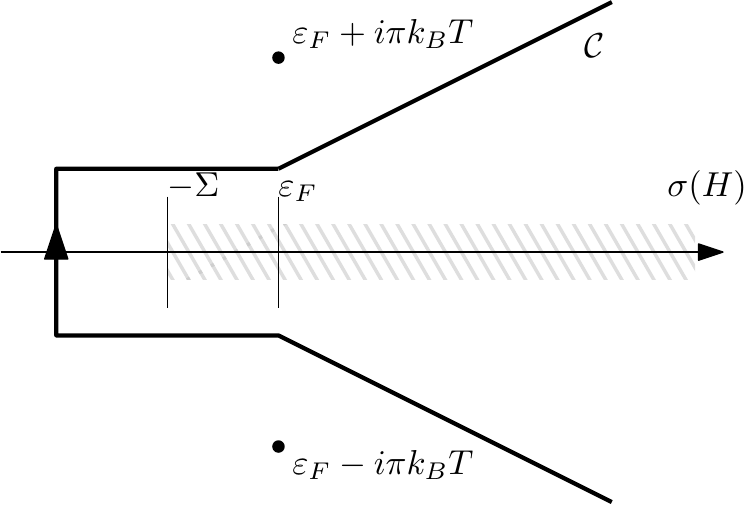}
    \caption{Contour $\cont$ used to differentiate the potential-to-density
      mapping. Note that this differs from standard rectangular
      contours because we need to ensure that $(z-H)^{-1}$
      remains Hilbert-Schmidt uniformly in $z \in \cont$.}
    \label{fig:cont}
  \end{figure}
  
  \textbf{Step 2: $F_{\varepsilon_{F}}$ is analytic.} Since potentials
  in $L^{2}_{\per}$ are infinitesimally $\Delta$-bounded, there is
  $\Sigma > 0$ such that $\sigma(H) \subset [-\Sigma, +\infty)$ for
  all $W$ with $\|W\|_{L^{2}_{\per}} \leq R$. Let $\cont$ be the
  contour given by Figure \ref{fig:cont}. This contour encloses the
  spectrum of $\sigma(H)$, avoids the poles of the Fermi-Dirac
  function at $\varepsilon_{F} + i\pi k_{B}T + 2i\pi k_{B}T \Z$, and is
  asymptotic to $\Im (z) = a \pm b \Re (z)$ for large $z \in \cont$,
  for some $a \in \R, b > 0$.
  The function $f_{\varepsilon_{F}}$ is therefore analytic inside
  $\cont$, decays exponentially when $|z| \to \infty$, and we have
  \begin{align*}
    f_{\varepsilon_{F}}(H) = \frac 1 {2\pi i}\int_{\cont} f_{\varepsilon_{F}}(z)(z-H)^{-1} dz
  \end{align*}
  as bounded operators\footnote{Note that if other occupation
    functions are used, the contour may need to be modified. For
    instance, Gaussian smearing \cite{cances2018numerical} decays
    exponentially only if $b < 1$. Our technique is less
    general than that of \cite{nier1993variational} based on the
    Helffer-Sj\"ostrand formula, which does not require any
    analyticity in $f_{\varepsilon_{F}}$.}. Let $R_{z} = (z-H)^{-1}$. Because
  $d(z,\sigma(H))$ increases at the same rate as $\Re(z)$, it follows
  from Lemma \ref{lemma:res-inv-lap} that
  $B_{z} = R_{z}(1-\Delta)$ is bounded in operator norm,
  independently of $z \in \cont$ (note that this would not be true for
  a rectangular contour). From the Kato-Seiler-Simon equality, there
  is therefore $C > 0$ such that, for all $z \in \cont, \delta W \in L^{2}_{\per}$,
  \begin{align*}
    \|R_{z} \delta W\|_{\Schatten_{2,\per}} \leq \|B_{z}\| \|(1-\Delta)^{-1} \delta W\|_{\Schatten_{2,\per}} \leq C \|\delta W\|_{L^{2}_{\per}}.
  \end{align*}
  Therefore, for $\|\delta W\|_{L^{2}_{\per}} \leq \frac 1 {2C}$, for all
  $z \in \cont$, $(z-H - \delta W)$ is invertible, and
  \begin{align*}
    (z-H - \delta W)^{-1} = (1-R_{z} \delta W)^{-1}R_{z} = \sum_{n \geq 0} (R_{z}\delta W)^{n} R_{z}.
  \end{align*}
  
  We can expand $f_{\varepsilon_{F}}(H + \delta W)$ as bounded operators:
  \begin{align*}
    f_{\varepsilon_{F}}(H + \delta W) - f_{\varepsilon_{F}}(H) &= \frac 1 {2\pi i}\int_{\cont} f_{\varepsilon_{F}}(z)\sum_{n \geq 1} (R_{z} \delta W)^{n} R_{z}dz.
    % &= \frac 1 {2\pi i} \int_{\cont}f_{\varepsilon_{F}}(z)\left( \sum_{n =1}^{N} \left(R_{z}  \delta W\right)^{n} R_{z} + (R_{z}\delta W)^{N+1} (1-R_{z}\delta W )^{-1}R_{z}\right) dz.
  \end{align*}
  For all $n \geq 1$, we have that
  \begin{align*}
    \left\|\left(R_{z}  \delta W\right)^{n} R_{z}(1-\Delta)\right\|_{\Schatten_{2,\per}} &\leq \|B_{z}\| \|R_{z} \delta W\|_{\Schatten_{2,\per}}^{n} \leq C^{n} \|\delta W\|_{L^{2}_{\per}}^{n}.
  \end{align*}
  % and similarly
  % \begin{align*}
  %   \left\|(R_{z}\delta W)^{N+1} (1-R_{z}\delta W )^{-1}R_{z}(1-\Delta)\right\|_{\Schatten_{2,\per}} &\leq C \|\delta W\|_{L^{2}_{\per}}^{N+1}.
  % \end{align*}
  It follows from the decay properties of $f_{\varepsilon_{F}}$ on
  $\cont$ that
  \begin{align*}
    \int_{\cont} \sum_{n \geq 1} f_{\varepsilon_{F}}(z) \left\|\left(R_{z}  \delta W\right)^{n} R_{z}(1-\Delta)\right\|_{\Schatten_{2,\per}} < \infty,
  \end{align*}
  and therefore that $F_{\varepsilon_{F}}$ is analytic at $W$.
  
  \textbf{Step 3: $F_{\varepsilon_{F}}'$ is self-adjoint and
    non-positive.}
  From the previous computations, we have
  \begin{align*}
    F_{\varepsilon_{F}}'(W) \cdot \delta W = \frac 1 {2\pi i}\int_{\cont} f_{\varepsilon_{F}}(z)(R_{z} \delta W R_{z})(x,x) dz.
  \end{align*}
  with $(R_{z} \delta W R_{z})(x,x) \in L^{2}_{\per}$ uniformly in $z \in \cont$.

  For all $z \in \cont$, $R_{z} \delta W R_{z}$ is periodic with
  fibers $(z-H_{k})^{-1} \delta W (z-H_{k})^{-1}$. Inserting the spectral
  $(z-H_{k})^{-1} = \sum_{n \in \N} (z-\varepsilon_{nk})^{-1}|u_{nk}\rangle\langle u_{nk}|$, we get, for all $\delta W_{1},
  \delta W_{2} \in L^{2}_{\per}$,
  \begin{align*}
    % \label{eq:Fprime}
    &\langle \delta W_{1}, F_{\varepsilon_{F}}'(W) \cdot \delta W_{2}\rangle\\
    &=\frac 1 {2\pi i}\int_{\cont} f_{\varepsilon_{F}}(z)\fint_{\BZ} \sum_{n,m \in \N} \frac{1}{(z-\varepsilon_{nk})(z-\varepsilon_{mk})} \Tr\Big( \overline{\delta W_{1}}|u_{nk}\rangle\langle u_{nk}, \delta W_{2} u_{mk}\rangle\langle  u_{mk}| \Big)dk dz.\\
    &=\frac 1 {2\pi i}\int_{\cont} f_{\varepsilon_{F}}(z)\fint_{\BZ} \sum_{n,m \in \N} \frac{1}{(z-\varepsilon_{nk})(z-\varepsilon_{mk})} \langle  \delta W_{1} u_{mk},  u_{nk} \rangle\langle  u_{nk}, \delta W_{2} u_{mk} \rangle   dk dz.
  \end{align*}
  The absolute convergence of this sum in $L^{2}_{\per}$ follows from
  the estimates above. For completeness, we give a more direct proof.
  Let $e_{K}(x) = \frac{e^{iKx}}{\sqrt {|\Gamma|}}$. The
  $(e_{K})_{K \in \cR^{*}}$ form a Hilbert basis of $L^{2}_{\per}$,
  and we have the property
  $\langle e_{K}, \delta W e_{K'} \rangle = \frac{1}{\sqrt
    {|\Gamma|}}\langle e_{K-K'}, {\delta W} \rangle$ for all
  $\delta W \in L^{2}, K,K' \in \cR^{*}$. It follows that
  \begin{align*}
    \sum_{n,m \in \N} \frac{|\langle  u_{nk}, \delta W u_{mk} \rangle|^{2}}{|z-\varepsilon_{nk}|^{2}} &= \sum_{n,m \in \N, K,K' \in \cR^{*}} \frac{|\langle  u_{nk}, e_{K}\rangle\langle e_{K},\delta W e_{K'}\rangle\langle e_{K'},u_{mk} \rangle|^{2}}{|z-\varepsilon_{nk}|^{2}}\\
    &= \frac 1 {|\Gamma|}\sum_{n \in \N, K,K' \in \cR^{*}} \frac{|\langle  u_{nk}, e_{K}\rangle\langle e_{K-K'},\delta W\rangle|^{2}}{|z-\varepsilon_{nk}|^{2}}\\
    &= \frac {\|\delta W\|_{L^{2}_{\per}}^{2}} {|\Gamma|}\sum_{n \in \N, K \in \cR^{*}} \frac{|\langle  u_{nk}, e_{K}\rangle|^{2}}{|z-\varepsilon_{nk}|^{2}}\\
    &= \frac{\|\delta W\|_{L^{2}_{\per}}^{2}}{|\Gamma|} \sum_{n \in \N} \frac{1}{|z-\varepsilon_{nk}|^{2}}
    % \|R_{z,k}
    % \delta W\|_{\Schatten_{2}(L^{2}_{\per})}^{2}
  \end{align*}
  which is bounded uniformly in $k \in \BZ, z \in \cont$.
  The result follows by a Cauchy-Schwarz inequality. Performing the contour integration, we obtain the
  following sum-over-states formula
  \begin{align*}
      \langle \delta W_{1}, F_{\varepsilon_{F}}'(W) \cdot \delta W_{2}\rangle &= \fint_{k \in \BZ}\sum_{n, m \in \N} \frac{f_{nk} - f_{mk}}{\varepsilon_{nk} - \varepsilon_{mk}} \langle  \delta W_{1} u_{mk},  u_{nk} \rangle\langle  u_{nk}, \delta W_{2} u_{mk} \rangle dk
    % F_{\varepsilon_{F}}'(W) \cdot \delta W &= \int_{k \in \BZ}\sum_{n, m \in \N} \frac{f_{nk} - f_{mk}}{\varepsilon_{nk} - \varepsilon_{mk}} \langle  u_{nk}, \delta W u_{mk} \rangle u_{nk}(x)u_{mk}^{*}(x) dk
  \end{align*}
  where $f_{nk} = f_{\varepsilon_{F}}(\varepsilon_{nk})$, and with the
  convention that
  \begin{align*}
    \frac{f_{\varepsilon_{F}}(\varepsilon)-f_{\varepsilon_{F}}(\varepsilon)}{\varepsilon-\varepsilon} = f_{\varepsilon_{F}}'(\varepsilon)
  \end{align*}
  arising from the double pole $(z-\varepsilon_{mk})^{-2}$ when
  $\varepsilon_{nk} = \varepsilon_{mk}$. $F_{\varepsilon_{F}}'(W)$ is
  therefore self-adjoint and, since
\begin{align*}
      % \langle \delta W_{1}, F_{\varepsilon_{F}}'(W) \cdot \delta W_{2}\rangle &= \int_{k \in \BZ}\sum_{n, m \in \N} \frac{f_{nk} - f_{mk}}{\varepsilon_{nk} - \varepsilon_{mk}} \langle  \delta W_{1} u_{mk},  u_{nk} \rangle\langle  u_{nk}, \delta W_{2} u_{mk} \rangle\\
  \langle \delta W, F_{\varepsilon_{F}}'(W) \cdot \delta W\rangle &= \fint_{k \in \BZ}\sum_{n, m \in \N} \frac{f_{nk} - f_{mk}}{\varepsilon_{nk} - \varepsilon_{mk}} \left| \langle  u_{nk}, \delta W u_{mk} \rangle  \right|^{2} dk
\end{align*}
and $f_{\varepsilon_{F}}$ is decreasing, it follows that $F_{\varepsilon_{F}}'(W)$ is
non-positive.

\textbf{Step 4: $F_{\varepsilon_{F}}'(W) + \beta \Delta$ is
  negative.} Assume that $F_{\varepsilon_{F}}'(W) + \beta\Delta$
is not negative. This means that there exists a sequence $W_{n}$ of
potentials with $\|W_{n}\|_{L^{2}_{\per}} = 1$ such that
\begin{align*}
  \langle W_{n}, F_{\varepsilon_{F}}'(W) \cdot W_{n}\rangle - \beta \int_{\Gamma}|\nabla W_{n}|^{2}  \to 0.
\end{align*}
Let $e \in L^{2}_{\per}$ be the constant
function: $e(x) = 1$. We have
$\sum_{K \neq 0} |K|^{2} |c_{K}(W_{n})|^{2} \to 0$, so that
$P_{e}^{\perp} W_{n}\to 0$ in $H^{1}$. Up to a subsequence, we can
assume that $\langle e, W_{n} \rangle \to c$, with $c \in \C$. It
follows from $W_{n} = P_{e} W_{n} + P_{e^{\perp}} W_{n}$ that
$W_{n} \to c e$ in $L^{2}$, and that $|c| = \frac 1 {\sqrt{|\Gamma|}}$. Then,
\begin{align*}
  \langle W_{n}, F_{\varepsilon_{F}}'(W) \cdot W_{n}\rangle  &\leq \fint_{k \in \BZ} f_{\varepsilon_{F}}'(\varepsilon_{1k}) |\langle  u_{1k}, W_{n} u_{1k} \rangle|^{2} dk \\
  &\to \frac 1 {\sqrt{|\Gamma|}}\fint_{k \in \BZ}f_{\varepsilon_{F}}'(\varepsilon_{1k})dk < 0.
\end{align*}
where we have used that $\||u_{1k}|^{2}\|_{L^{2}_{\per}} \leq C
\|u_{1k}\|_{H^{1}_{\per}}^{2}$ is bounded uniformly in $k$.
\end{proof}

\subsection{Self-consistent Fermi level}
We now solve the equation $\int_{\Gamma} F_{\varepsilon_{F}}(W) =
N_{\rm el}$ for $\varepsilon_{F}$.
\begin{lemma}
  \label{lem:fermi_level}
  For all $W \in L^{2}_{\per}$ and $N_{\rm el} > 0$, the equation $\int_{\Gamma}
  F_{\varepsilon_{F}}(W) = N_{\rm el}$ has a unique
  solution $\varepsilon_{F}(W)$. The map
  \begin{align*}
    F(W) &= F_{\varepsilon_{F}(W)}(W)
  \end{align*}
  is analytic from $L^{2}_{\per}$ to $L^{2}_{\per}$. Its
  differential $F'(W)$ is self-adjoint and non-positive, and satisfies
  $F'(W)\cdot e = 0$ where $e$ is the constant function.
\end{lemma}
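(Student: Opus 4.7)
The proof splits into three steps: existence and uniqueness of $\varepsilon_F(W)$, analyticity of $F$, and the structure of $F'(W)$.

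First, I would set $N(\varepsilon_F, W) := \int_\Gamma F_{\varepsilon_F}(W)$ and use the spectral representation $N(\varepsilon_F, W) = \fint_\BZ \sum_n f_{\varepsilon_F}(\varepsilon_{nk})\, dk$ already appearing in the proof of Lemma~\ref{lem:F_per}. Since $f_{\varepsilon_F}(\varepsilon)$ is strictly increasing in $\varepsilon_F$, and since the lower bound $\varepsilon_{nk} \geq a + b n^{2/3}$ together with the exponential decay of $f_{\varepsilon_F}$ allows differentiation under the integral, $\varepsilon_F \mapsto N(\varepsilon_F, W)$ is smooth, strictly increasing, and runs from $0$ to $+\infty$, so there is a unique real $\varepsilon_F(W)$ with $N(\varepsilon_F(W), W) = N_{\rm el}$. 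For analyticity I would rerun the contour argument of Lemma~\ref{lem:F_per} with both $W$ and $\varepsilon_F$ complex, noting that the contour $\cont$ may be chosen uniformly for $(W, \varepsilon_F)$ in a small neighborhood of a given real point $(W_0, \varepsilon_{F,0})$ as long as it stays inside the strip $\R + i(-\pi k_B T, \pi k_B T)$ where $f_{\varepsilon_F}$ is analytic; this makes $(W, \varepsilon_F) \mapsto F_{\varepsilon_F}(W)$ jointly analytic. Since $\partial_{\varepsilon_F} N > 0$, the analytic implicit function theorem then yields $W \mapsto \varepsilon_F(W)$ analytic, and $F = F_{\varepsilon_F(\cdot)}(\cdot)$ analytic by composition.

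The computation of $F'(W)$ proceeds by the chain rule:
\begin{align*}
  F'(W)\cdot \delta W = F_{\varepsilon_F(W)}'(W)\cdot \delta W + h_W\,\varepsilon_F'(W)\cdot \delta W,
\end{align*}
where $h_W := \partial_{\varepsilon_F} F_{\varepsilon_F}(W) = -\fint_\BZ \sum_n f_{\varepsilon_F}'(\varepsilon_{nk}) |u_{nk}|^2\,dk$ is obtained by the same contour formula and is strictly positive in $L^2_\per$. The key algebraic identity is
\begin{align*}
  F_{\varepsilon_F(W)}'(W)\cdot e = -h_W,
\end{align*}
obtained by setting $\delta W_2 = e$ in the sum-over-states formula of Lemma~\ref{lem:F_per}: the matrix elements $\langle u_{nk}, e\,u_{mk}\rangle = \delta_{nm}$ collapse the double sum to its diagonal, whose entries equal $f_{\varepsilon_F}'(\varepsilon_{nk})$ by the stated convention. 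Differentiating the constraint $N(\varepsilon_F(W), W) = N_{\rm el}$ in $W$ and inserting this identity gives $\varepsilon_F'(W)\cdot \delta W = \langle h_W, \delta W\rangle / \int_\Gamma h_W$, hence
\begin{align*}
  F'(W) = F_{\varepsilon_F(W)}'(W) + \frac{|h_W\rangle\langle h_W|}{\int_\Gamma h_W}.
\end{align*}
This is manifestly self-adjoint, and immediately gives $F'(W)\cdot e = -h_W + h_W = 0$.

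The main obstacle is non-positivity, since $F'(W)$ is written as the sum of a non-positive operator and a rank-one non-negative one. I would establish it by a Cauchy--Schwarz inequality on the positive measure $(-f_{\varepsilon_F}'(\varepsilon_{nk}))\,dk$:
\begin{align*}
  |\langle h_W, \delta W\rangle|^2 \leq \left(\fint_\BZ \sum_n (-f_{\varepsilon_F}'(\varepsilon_{nk}))\,dk\right)\left(\fint_\BZ \sum_n (-f_{\varepsilon_F}'(\varepsilon_{nk}))\,|\langle u_{nk}, \delta W\,u_{nk}\rangle|^2\,dk\right).
\end{align*}
The first factor equals $\int_\Gamma h_W$, while the second factor is exactly the diagonal $n = m$ contribution to $-\langle \delta W, F_{\varepsilon_F(W)}'(W)\,\delta W\rangle$, and is therefore dominated by $-\langle \delta W, F_{\varepsilon_F(W)}'(W)\,\delta W\rangle$ itself (the off-diagonal terms $n\neq m$ being non-negative since $f_{\varepsilon_F}$ is decreasing). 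Dividing by $\int_\Gamma h_W$ yields $\langle \delta W, F'(W)\,\delta W\rangle \leq 0$. An alternative, avoiding this computation, is to adapt Remark~\ref{rem:energy} to the canonical ensemble: the functional $W \mapsto \inf\{\Trper((-\Delta + W)\gamma) + k_B T\,\Trper(\gamma\log\gamma + (1-\gamma)\log(1-\gamma)) : 0 \leq \gamma \leq 1,\ \Trper \gamma = N_{\rm el}\}$ is concave as an infimum of $W$-affine functionals, and has $F$ as gradient by a Hellmann--Feynman argument, so its Hessian $F'(W)$ is non-positive.
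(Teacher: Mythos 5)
Your proof is correct and follows essentially the same route as the paper: the same counting function $\cN(\varepsilon_{F},W)=\langle e, F_{\varepsilon_{F}}(W)\rangle$, the implicit function theorem for $\varepsilon_{F}(W)$, and the identification of $F'(W)$ as $F'_{\varepsilon_{F}}(W)$ plus a rank-one correction built from $F'_{\varepsilon_{F}}(W)\cdot e$. The only notable difference is cosmetic: for non-positivity the paper applies Cauchy--Schwarz abstractly to $|A|^{1/2}$ with $A=F'_{\varepsilon_{F}}(W)$, writing $F'(W)=A-\langle e,Ae\rangle^{-1}|Ae\rangle\langle Ae|$, which avoids having to observe separately that the off-diagonal sum-over-states terms are non-negative, whereas you run the same inequality concretely on the diagonal part of the sum-over-states formula; both work, and your extra care about joint analyticity in $(\varepsilon_{F},W)$ is a point the paper leaves implicit.
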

\begin{proof}
  Let
  \begin{align*}
    \cN(\varepsilon_{F},W) = \langle  e, F_{\varepsilon_{F}}(W) \rangle = \fint_{\BZ} \sum_{n \in \N} f_{\varepsilon_{F}}(\varepsilon_{nk}) dk
  \end{align*}
  be the total number of electrons with Fermi level $\varepsilon_{F}$.
  Then by the previous lemma $\cN$ is analytic on
  $\R \times L^{2}_{\per}$, with
  \begin{align*}
    \frac {\partial \cN} {\partial \varepsilon_{F}}(\varepsilon_{F},W) &= - \langle e, F_{\varepsilon_{F}}'(W) \cdot
  e\rangle = - \fint_{\BZ} \sum_{n \in \N} f'_{\varepsilon_{F}}(\varepsilon_{nk}) dk> 0\\
    \frac {\partial \cN} {\partial W}(\varepsilon_{F},W) \cdot \delta W &= \langle  e, F_{\varepsilon_{F}}'(W) \cdot \delta W \rangle = \langle  F_{\varepsilon_{F}}'(W) \cdot e, \delta W \rangle
  \end{align*}
  % $\cN$ and $\cD$ are finite-temperature analogues of the integrated
  % density of states (IDOS) and density of states (DOS) respectively
  % \cite{cances2018numerical}.
  For all $W \in L^{2}_{\per}$,
  $\cN(\cdot, W)$ has limit $0$ at $-\infty$ and $+\infty$ at
  $+\infty$, so that there is a unique solution $\varepsilon_{F}(W)$
  of $\cN(\varepsilon_{F},W) = N_{\rm el}$. From the implicit function
  theorem, we get that $\varepsilon_{F}(W)$ is analytic on
  $L^{2}_{\per}$ and
  \begin{align*}
    \varepsilon_{F}'(W) \cdot \delta W = \frac 1 {\langle e, F_{\varepsilon_{F}}'(W) \cdot
  e\rangle} \langle  F_{\varepsilon_{F}}'(W) \cdot e, \delta W\rangle
\end{align*}
and therefore
  \begin{align*}
    F'(W) \cdot \delta W&= F'_{\varepsilon_{F}}(W) \cdot \delta W - \frac{1}{\langle e, F_{\varepsilon_{F}}'(W)\cdot e\rangle}  \langle  F_{\varepsilon_{F}}'(W) \cdot e, \delta W \rangle \; F_{\varepsilon_{F}}'(W)\cdot e
  \end{align*}
  
  In particular, $F'(W)\cdot e = 0$ and $F'(W)$ is self-adjoint.

  The expression above is of the form
  $F'(W) = A - \frac 1 {\langle e, A e \rangle} |Ae\rangle\langle
  Ae|$, with $A = F_{\varepsilon_{F}}'(W)$ bounded, self-adjoint and
  non-positive on a Hilbert space $\cH$ and $e \in \cH$. In
  particular, $F'(W)$ is self-adjoint, $F'(W) \cdot e = 0$, and we
  compute, for all $x \in \cH$,
  \begin{align*}
    \left\langle x, \left( A - \frac 1 {\langle e, A e
          \rangle} |Ae\rangle\langle Ae|\right) x \right\rangle &= -\left(\||A|^{1/2}x\|^{2} - \frac{|\langle |A|^{1/2}x, |A|^{1/2} e \rangle|^{2}}{\||A|^{1/2}e\|^{2}}\right) \leq 0
\end{align*}
by the Cauchy-Schwartz inequality, from where it follows that $F'(W)$
is non-positive.

\end{proof}

We now look for solutions $W \in L^{2}_{\per}$ of the
equation
\begin{align*}
  W = W_{\nucl} + \vcper F(W).
\end{align*}
We are ready for the
\begin{proof}[Proof of Theorem \ref{thm:main_periodic}.]
  Set
  \begin{align*}
    M_{\alpha}(W,W_{\nucl}) = W + \alpha(W_{\nucl} + \vcper F(W) - W)
  \end{align*}
  In particular, $M_{\alpha}$ is analytic from $L^{2}_{\per}$ to
itself. From Theorem \ref{thm:abstract_FP}, we only need to check that
the spectral radius of the operator
  \begin{align*}
    J_{\alpha} = 1 + \alpha(\vcper F'(W^{*}) - 1)
  \end{align*}
  is smaller than $1$ for $\alpha$ small enough. This is ensured by
  the fact that $\vcper$ and $F'(W^{*})$ are bounded operators on
  $L^{2}_{\per}$ and $\vcper$ is non-negative, so that, from Lemma \ref{lem:sigma_AB},
  \begin{align*}
    \sigma(\vcper F'(W^{*})) \setminus \{0\} = \sigma(\sqrt \vcper F'(W^{*}) \sqrt \vcper) \setminus \{0\}
  \end{align*}
  This last operator is a non-positive self-adjoint bounded operator
  on $L^{2}_{\per}$, hence the result.
\end{proof}

\section{The defect problem}
\label{sec:defect}
In this section we fix $W_{\per} \in L^{2}_{\per}$ and
$\varepsilon_{F} \in \R$. Let
\begin{align*}
  H_{\per} = -\Delta + W_{\per}
\end{align*}
be the background periodic Hamiltonian.

We first investigate the renormalized potential-to-density mapping.
\begin{lemma}
  There is a neighborhood $\mathcal V$ of $0$ in $L^{2}$ in which the
  map
\begin{align*}
  G(V) = (f_{\varepsilon_{F}}(H_{\per} + V) - f_{\varepsilon_{F}}(H_{\per}))(x,x)
\end{align*}
is analytic from $L^{2}_{N}$ to $L^{2}_{N}$ for all $N \geq 0$.

Let $\chi_{0} = G'(0)$. Then $V \mapsto G(V) - \chi_{0} V$ maps
$\mathcal V \cap L^{2}_{N}$ to $L^{2}_{2N}$ for all $N \geq 0$.
\end{lemma}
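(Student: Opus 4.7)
The plan is to mirror the contour-integration argument of Lemma~\ref{lem:F_per} while tracking weighted estimates. With $R_z = (z-H_{\per})^{-1}$ and $\cont$ the same contour as before, I would write
\begin{align*}
  G(V) = \frac{1}{2\pi i} \int_{\cont} f_{\varepsilon_F}(z) \bigl[(z-H_{\per}-V)^{-1} - R_z\bigr](x,x)\, dz
\end{align*}
and expand the first resolvent as a Dyson series to obtain $G(V) = \sum_{n \geq 1} G_n(V)$ with
\begin{align*}
  G_n(V)(x) = \frac{1}{2\pi i} \int_{\cont} f_{\varepsilon_F}(z)\, ((R_z V)^n R_z)(x,x)\, dz.
\end{align*}
The linear term $G_1(V)$ is, by definition, $\chi_0 V$, and $G(V) - \chi_0 V = \sum_{n \geq 2} G_n(V)$ collects the terms of order at least two in $V$.

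For the $L^2_N \to L^2_N$ analyticity, I would apply Lemma~\ref{lem:def_density} to the operator $w_N (R_z V)^n R_z$ (with $w_N(x) = \langle x \rangle^N$ acting by left multiplication), reducing the task to a Schatten-2 bound on $w_N (R_z V)^n R_z (1-\Delta)$. The new ingredient compared to Lemma~\ref{lem:F_per} is an exponentially decaying off-diagonal bound $|R_z(x,y)| \leq C_z e^{-\mu_z |x-y|}$ for $z \in \cont$, coming from the analyticity of the fibers $R_{z,k}$ in $k$ on a strip (essentially a Combes-Thomas estimate). Combined with Peetre's inequality $w_N(x) \leq C\, w_N(x-y) w_N(y)$, this shows that $w_N R_z w_N^{-1}$ extends to a bounded operator on $L^2$, uniformly in $z \in \cont$. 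Writing $w_N R_z V = (w_N R_z w_N^{-1})(w_N V)$ and applying Kato-Seiler-Simon to $w_N V \in L^2$ yields $\|w_N R_z V\|_{\Schatten_2} \leq C \|V\|_{L^2_N}$. The remaining factors $(R_z V)^{n-1} R_z (1-\Delta)$ are controlled exactly as in Lemma~\ref{lem:F_per}, giving
\begin{align*}
  \|w_N G_n(V)\|_{L^2} \leq C^n \|V\|_{L^2_N} \|V\|_{L^2}^{n-1},
\end{align*}
whose summability and integrability in $z$ (via the exponential decay of $f_{\varepsilon_F}$ on $\cont$) establishes analyticity of $G : L^2_N \to L^2_N$.

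For the improvement $G(V) - \chi_0 V \in L^2_{2N}$, the key observation is that each $G_n$ with $n \geq 2$ has at least two $V$-factors available to absorb the extra weight. Splitting $w_{2N}(x) \leq C\, w_N(x-y_1)\, w_N(y_1)\, w_N(x-y_2)\, w_N(y_2)$ by two applications of Peetre distributes the weight over the first two $V$-factors, producing two Schatten-2 factors of the form $w_N R_z V$ each bounded by $C\|V\|_{L^2_N}$, the remaining $n-2$ factors of $R_z V$ being controlled by $C^{n-2}\|V\|_{L^2}^{n-2}$. The resulting bound
\begin{align*}
  \|w_{2N} G_n(V)\|_{L^2} \leq C^n \|V\|_{L^2_N}^2 \|V\|_{L^2}^{n-2},
\end{align*}
summed over $n \geq 2$, yields the claim.

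The main obstacle is the weighted resolvent estimate: one must establish that $w_N R_z w_N^{-1}$ is bounded on $L^2$ uniformly enough in $z \in \cont$ for the contour integrals to converge. This rests on an exponential off-diagonal decay of $R_z$ with a rate bounded below on $\cont$; the contour is constructed so that $d(z, \sigma(H_{\per}))$ stays bounded below for bounded $z$ and grows linearly at infinity, which makes the Combes-Thomas rate uniformly positive, and the exponential decay of $f_{\varepsilon_F}$ on $\cont$ then tames any polynomial growth of the constants.
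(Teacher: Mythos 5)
Your overall architecture matches the paper's: the Dyson expansion $G=\sum_{n\geq 1}G_{n}$ along the contour $\cont$, the identification $\chi_{0}V=G_{1}(V)$, and above all the key structural point that for $n\geq 2$ the weight $\langle x\rangle^{2N}$ can be split so that two distinct $V$-factors each absorb a weight $\langle\cdot\rangle^{N}$, yielding $\|V\|_{L^{2}_{N}}^{2}\|V\|_{L^{2}}^{n-2}$. Where you genuinely diverge is in how the weight is propagated past the resolvents. The paper stays in Bloch space: it writes the fiber $D^{(n)}_{z}(V)_{q}$ as a convolution-like integral over $q_{1},\dots,q_{n-1}$, converts multiplication by $x_{i}$ into $i\partial_{q_{i}}$, uses the analyticity of $k\mapsto R_{z,k}$ (uniform on $\cont$) to absorb derivatives falling on resolvent factors, and uses the change of variables $q_{1}'=q-q_{1}$ to shift derivatives from one $V$-factor to the next --- the exact Fourier dual of your double application of Peetre's inequality. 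You work instead in physical space with weighted resolvent bounds of Combes--Thomas type. Since analyticity of the fibers in a strip is equivalent to exponential off-diagonal decay, both routes are sound; yours is arguably more elementary and avoids the bookkeeping of the fibered convolution integrals, while the paper's keeps everything inside the Bloch machinery it needs anyway for Lemma \ref{lem:chi_0_def}.

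Two points in your sketch need repair, though neither is fatal. First, the pointwise bound $|R_{z}(x,y)|\leq C_{z}e^{-\mu_{z}|x-y|}$ is false near the diagonal (already the free resolvent kernel behaves like $|x-y|^{-1}$ in dimension $3$); what you actually need, and what Combes--Thomas or fiber analyticity provides, is the weighted operator bound, i.e.\ a weighted version of Lemma \ref{lemma:res-inv-lap} asserting that $w_{N}R_{z}(1-\Delta)w_{N}^{-1}$ is bounded uniformly on $\cont$. Second, Kato--Seiler--Simon applies to products $f(-i\nabla)g(x)$, and $w_{N}R_{z}w_{N}^{-1}$ is not a Fourier multiplier, so you cannot "apply Kato--Seiler--Simon to $w_{N}V$" directly. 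The fix is to interpose the free resolvent: write $w_{N}R_{z}V=(w_{N}R_{z}(1-\Delta)w_{N}^{-1})\,(w_{N}(1-\Delta)^{-1}V)$, and bound the Hilbert--Schmidt norm of the second factor from its explicit kernel $w_{N}(x)G_{1}(x-y)V(y)$, where $G_{1}$ is the exponentially decaying Yukawa kernel: Peetre gives $w_{N}(x)\leq Cw_{N}(x-y)w_{N}(y)$, and $w_{N}G_{1}\in L^{2}$, so the norm is controlled by $\|V\|_{L^{2}_{N}}$. With these adjustments your argument closes.
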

\begin{proof}
  \textbf{Step 1: the case $N = 0$.} The proof of this step is
  similar to that of Lemma \ref{lem:F_per}. We take a contour $\cont$
  enclosing the spectrum of $H_{\per}$ with the same shape as in
  Figure \ref{fig:cont}, which encloses the spectrum of
  $H_{\per} + V$ for $\|V\|_{L^{2}}$ small because $L^{2}$ potentials
  are infinitesimally $\Delta$-bounded. From Lemma
  \ref{lemma:res-inv-lap}, there is $C > 0$ such that, for all
  $z \in \cont$ and $V \in L^{2}$,
  \begin{align*}
    \|R_{z}V\|_{\Schatten_{2}} = \|R_{z}(1-\Delta)(1-\Delta)^{-1}V\|_{\Schatten_{2}} \leq C \|V\|_{L^{2}}.
  \end{align*}
  with $R_{z} = (z-H_{\per})^{-1}$. It follows that, for
  $\|V\|_{L^{2}} \leq \frac 1 {2C}$, $(1-R_{z} V)$ is invertible, and we
  have
  \begin{align}
    f_{\varepsilon_{F}}(H_{\per} + V) -
    f_{\varepsilon_{F}}(H_{\per}) &= \frac 1 {2\pi i}\int_{\cont} f_{\varepsilon_{F}}(z)\left( \left(z- (H_{\per} + V)\right)^{-1} - (z-H_{\per})^{-1} \right)dz \notag\\
    &= \frac 1 {2\pi i} \int_{\cont}f_{\varepsilon_{F}}(z) \sum_{n\geq 1} (R_{z} V)^{n} R_{z} dz
    \label{eq:exp_V_def}
    % &= \frac 1 {2\pi i} \int_{\cont}f_{\varepsilon_{F}}(z)\left( \sum_{n =0}^{N} \left(R_{z}  \right)^{n} V + (R_{z} V)^{N+1} (1-R_{z} V )^{-1}R_{z}\right) dz
  \end{align}
  as bounded operators. From the estimate
  \begin{align*}
    \|(R_{z} V)^{n-1} R_{z}V R_{z}(1-\Delta)\|_{\Schatten_{2}} \leq \|R_{z}(1-\Delta)\| \|R_{z} V\|_{\Schatten_{2}}^{n} \leq C^{n} \|V\|_{L^{2}}^{n}
  \end{align*}
  with $C$ uniform in $z \in \cont$ and the decay properties of
  $f_{\varepsilon_{F}}$, it follows that the expansion
  \eqref{eq:exp_V_def} converges absolutely. Therefore,
  $f_{\varepsilon_{F}}(H_{\per} + V) - f_{\varepsilon_{F}}(H_{\per})$
  can be associated a density $G(V) \in L^{2}$, and $G$ is analytic in a
  neighborhood of $0$.
  
  \textbf{Step 2: Bloch structure of the expansion of the density at
    all orders.}   We first note that
  \begin{align*}
    H_{k+q} - H_{k} = 2 (-i\nabla + k) \cdot q + |q|^{2}.
  \end{align*}
  The bounded operator $R_{z} = (z-H)^{-1}$ on $L^{2}$ is periodic with fibers
  $R_{z,k} = (z-H_{k})^{-1}$. Since $R_{z,k}(1-\Delta)$ is bounded
  uniformly in $z \in \cont$ and $k \in \BZ$,
  \begin{align*}
    \|R_{z,k}(2 (-i\nabla + k) \cdot q + |q|^{2})\| = \|R_{z,k}(1-\Delta)(1-\Delta)^{-1}(2 (-i\nabla + k) \cdot q + |q|^{2})\| \leq C (|q|+|q|^{2}).
  \end{align*}
  For $q$ small enough, we then have
  \begin{align*}
    R_{z,k+q} = \sum_{n \geq 0} \Big(R_{z,k}(2(-i\nabla + k) \cdot q+|q|^{2})\Big)^{n} R_{z,k}
  \end{align*}
  and, from the previous estimate, $R_{z,k}$ is analytic in the
  $\|\cdot (1-\Delta)\|_{\Schatten_{2,\per}}$ topology, uniformly in
  $z \in \cont$ and $k \in \BZ$.

  For $z \in \cont$, let
  \begin{align*}
    D^{(n)}_{z}(V) = ((R_{z} V)^{n}R_{z})(x,x).
  \end{align*}
  
  We first consider the first-order term $D^{(1)}_{z}$. Let $V = \int_{\BZ} e^{iqx} V_{q}(x)dq \in L^{2}$. Elementary
  computations show that if $A$ is a periodic operator with fibers
  $A_{k}$, then $e^{-iqx} A e^{iqx}$ is a periodic operator with
  fibers $A_{k+q}$, and that $e^{iqx} A$ has density
  \begin{align*}
    (e^{iqx} A)(x,x) = e^{iqx} A(x,x) = e^{iqx} \fint_{\BZ} A_{k}(x,x) dk.
  \end{align*}
  Therefore,
  \begin{align*}
    D_{z}^{(1)}(V) &= \int_{\BZ} (e^{iqx} e^{-iqx} R_{z} e^{iqx} V_q R_{z})(x,x) dq= \int_{\BZ} e^{iqx} \fint_{\BZ}(R_{z,k+q} V_q R_{z,k})(x,x) dk\, dq\\
    D_{z}^{(1)}(V)_{q} &= \fint_{\BZ}(R_{z,k+q} V_q R_{z,k})(x,x) dk.
  \end{align*}
  Similarly, in the general case,
  \begin{align}
    % D_{z}^{(1)}(V) = \int_{\BZ} e^{iq_{0}x} (R_{z,k+q_{0}} V_{q_{0}-q_{1}} R_{z,k+q_{1}}V_{q_{1}-q_{2}} \cdots R_{z,k})(x,x)\\
    \label{eq:def_Dn} D_{z}^{(n)}(V)_{q} = \int_{q_{1},\dots,q_{n-1} \in \BZ} \fint_{k \in \BZ} (R_{z,k+q} V_{q-q_{1}} R_{z,k+q_{1}}V_{q_{1}-q_{2}} \cdots R_{z,k})(x,x) dk\, dq_{1}\cdots dq_{n-1} 
  \end{align}

  \textbf{Step 3: the case $N > 0$.}
  Since for $i = 1,2,3$
  \begin{align}
    \label{eq:diff_Dn}
  (x_{i} D^{(n)}_{z}(V))_{q} &= i \partial_{q_{i}} D^{(n)}_{z}(V)_{q}
  % &= D^{(2)}(x_{i} V, \underbrace{V, \cdots, V}_{n-1}) + \int_{q_{1},\dots,q_{n} \in \BZ} \fint_{k \in \BZ} (i \partial_{q} R_{z,k+q} V_{q-q_{1}} R_{z,k+q_{1}}V_{q_{1}-q_{2}} \cdots R_{z,k})(x,x) dq_{1}\cdots dq_{n} dk
\end{align}
and $R_{z,k}$ is analytic for the
$\|\cdot (1-\Delta)\|_{\Schatten_{2,\per}}$ topology, uniformly in
$z \in \cont, k \in \BZ$, the repeated application of
\eqref{eq:diff_Dn} to \eqref{eq:def_Dn} yields a bound of the form 
\begin{align*}
  \|D^{(n)}_{z}(V)\|_{L^{2}_{N}} \leq C_{N} \|V\|_{L^{2}_{N}} C^{n-1}\|V\|_{L^{2}}^{n-1}
\end{align*}
for all $N \geq 0$, with $C$ independent on $N$, this bound being uniform in
$z \in \cont$.

It follows that, for $\|V\|_{L^{2}} \leq \frac 1 {2C}$, for all $N$, we
have the absolutely convergent expansion
  \begin{align*}
    G(V) &= \frac 1 {2\pi i} \int_{\cont}f_{\varepsilon_{F}}(z) \sum_{n \geq 1} D^{(n)}_{z}(V) dz
  \end{align*}
  in $L^{2}_{N}$.

  \textbf{Step 4: $G(V) - \chi_{0} V : L^{2}_{N} \mapsto L^{2}_{2N}$.}
  We have
  \begin{align*}
    G(V) - \chi_{0} V &= \frac 1 {2\pi i} \int_{\cont}f_{\varepsilon_{F}}(z) \sum_{n \geq 2} D^{(n)}_{z}(V) dz.
  \end{align*}
  Consider terms of the form
  \begin{align}
    \label{eq:form_terms}
    I(q)=&\int_{q_{1},\dots,q_{n-1} \in \BZ} \fint_{k \in \BZ} (R_{z,k+q}^{(1)} V^{(1)}_{q-q_{1}} R_{z,k+q_{1}}^{(2)}V^{(2)}_{q_{1}-q_{2}} R_{z,k+q_{2}} V_{q_{2}-q_{3}}\cdots R_{z,k})(x,x)  dk dq_{1}\cdots dq_{n-1}
    % =&- \int_{q_{1}',\dots,q_{n-1} \in \BZ} \fint_{k \in \BZ} (R_{z,k+q}^{(1)} V^{(1)}_{q_{1}'} R_{z,k+q-q_{1}'}^{(2)}V^{(2)}_{q-q_{1}'-q_{2}} R_{z,k+q_{2}} V_{q_{2}-q_{3}}\cdots R_{z,k})(x,x) dq_{1}\cdots dq_{n} dk, \notag
  \end{align}
  where $R^{(1)}_{z,k},R^{(2)}_{z,k}$ are $R_{z,k}$ or their
  derivatives, and $V^{(1)}_{q},V^{(2)}_{q}$ are $V_{q}$ or its
  derivatives. Performing the change of variable $q_{1}' = q-q_{1}$, we
  obtain
  \begin{align*}
    I(q)&=- \int_{q_{1}' \in q-\BZ} \int_{q_{2},\dots,q_{n-1} \in \BZ} \fint_{k \in \BZ} (R_{z,k+q}^{(1)} V^{(1)}_{q_{1}'} R_{z,k+q-q_{1}'}^{(2)}V^{(2)}_{q-q_{1}'-q_{2}} R_{z,k+q_{2}} V_{q_{2}-q_{3}}\cdots R_{z,k})(x,x)  dk dq_{1}'\cdots dq_{n-1}\\
    &=- \int_{q_{1}',\dots,q_{n-1} \in \BZ} \fint_{k \in \BZ} (R_{z,k+q}^{(1)} V^{(1)}_{q_{1}'} R_{z,k+q-q_{1}'}^{(2)}V^{(2)}_{q-q_{1}'-q_{2}} R_{z,k+q_{2}} V_{q_{2}-q_{3}}\cdots R_{z,k})(x,x)  dk dq_{1}'\cdots dq_{n-1},
  \end{align*}
  where the quasi-periodicity of the Bloch transform implies that
  $q_{1}' \mapsto V^{(1)}_{q_{1}'}
  R_{z,k+q-q_{1}'}^{(2)}V^{(2)}_{q-q_{1}'-q_{2}}$ is
  $\cR^{*}$-periodic, and therefore that we can integrate $q_{1}'$
  over $\BZ$ rather than $q-\BZ$. This shows that we can transfer the
  $q$ dependence from $V^{(1)}$ to $V^{(2)}$ in the convolution-like
  terms of the form \eqref{eq:form_terms}.
  
  % can transfer the $q$ dependence on $V^{(2)}$: These terms have the form of a periodic convolution and
  % their $q$ derivatives can be expressed with respect to either
  % $V^{(1)}$ or $V^{(2)}$.

  Let $p_{2N}$ be a polynomial of degree $2N$. Applying
  \eqref{eq:diff_Dn} successively to $p_{2N}(x) D_{z}^{(n)}(V)$ and
  using the above procedure to the divide the $2N$ derivatives between
  $V^{(1)}_{q}$ and $V^{(2)}_{q}$, we obtain that
  $(p_{2N}(x) D_{z}^{(n)}(V))_{q}$ contains terms of the form
  \eqref{eq:form_terms} with $V^{(1)}_{q}$ and $V^{(2)}_{q}$ being
  derivatives of $V_{q}$ of order at most $N$. Using the analyticity
  of $k \mapsto R_{z,k}$, we obtain a bound of the form
  \begin{align*}
    \|D^{(n)}_{z}(V)\|_{L^{2}_{2N}} \leq C_{N} \|V\|_{L^{2}_{N}}^{2} C^{n-2} \|V\|_{L^{2}}^{n-2}
  \end{align*}
  where $C$ is independent of $N$, the bound being uniform in
  $z \in \cont$. The result follows.

\end{proof}

Recall that the operator $\vc$ is given by the convolution
\begin{align*}
  (\vc \rho)(x) = \frac 1 {4\pi}\int_{\R^{3}} \frac{\rho(y)}{|x-y|} dy.
\end{align*}
In Fourier space, this is a multiplication by $\frac 1 {|q|^{2}}$.
This is an unbounded non-negative self-adjoint operator on $L^{2}$. We denote its
formal inverse by $\vc^{-1} = -\Delta$, also an unbounded non-negative
self-adjoint operator on $L^{2}$. $\vc^{-1}$ does not have a spectral
gap at zero, but $-\chi_{0} + \vc^{-1}$ does:
\begin{lemma}
  \label{lem:chi_0_def}
  Let $N \geq 0$. The operator $-\chi_{0} + \vc^{-1}$
  is self-adjoint and positive on $L^{2}$, and its inverse is bounded from
  $H^{-2}_{N}$ to $L^{2}_{N}$.% and from
  % $L^{2}$ to $H^{2}$
  The operator $\varepsilon = 1 - \vc \chi_{0}$ is invertible in
  $L^{2}_{N}$, with bounded inverse
  \begin{align*}
    \varepsilon^{-1} = (1-\vc \chi_{0})^{-1} = \left(-\chi_{0} + \vc^{-1}\right)^{-1} \vc^{-1}.
  \end{align*}
  The operator $\kerker \varepsilon$ is therefore bounded and
  invertible on $L^{2}_{N}$.
\end{lemma}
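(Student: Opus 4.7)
The plan is to reduce both the assertion on $\varepsilon^{-1}$ and the invertibility of $\kerker \varepsilon$ to a single coercivity estimate for $-\chi_0 + \vc^{-1}$, which in turn I would derive by combining the non-positivity of $\chi_0$ on $L^2$ with the refined fiberwise coercivity at $k = 0$ provided by Lemma \ref{lem:F_per}.

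First, I would establish that $\chi_0 = G'(0)$ is bounded, self-adjoint, and non-positive on $L^2$. Differentiating the contour integral representation of $G$ derived in the previous lemma at $V = 0$ gives a periodic bounded operator on $L^2$ whose Bloch fibers $\chi_{0,k}$ satisfy a sum-over-states formula identical in structure to the one in Step 3 of the proof of Lemma \ref{lem:F_per}; each fiber is thus self-adjoint and non-positive on $L^2_{\per}$, and these properties transfer to $\chi_0$ on $L^2$. In particular $-\chi_0 + \vc^{-1}$ is a non-negative self-adjoint operator on $L^2$ with form domain $H^1$.

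Second, I would promote non-negativity to uniform fiberwise coercivity, which I expect to be the main obstacle. The key identification is that the $k = 0$ fiber of $\chi_0$, acting on $L^2_{\per}$, coincides with $F'_{\varepsilon_F}(W_{\per})$: a periodic perturbation is exactly the $k=0$ Bloch mode, and both operators are given by the same contour-integral formula. Applying Lemma \ref{lem:F_per} with $\beta = 1$ then yields
\begin{align*}
  -\chi_{0,0} - \Delta \;=\; -F'_{\varepsilon_F}(W_{\per}) - \Delta \;\geq\; c_0 > 0 \quad \text{on } L^2_{\per}.
\end{align*}
For $k$ away from $0$ in the compact Brillouin zone, the fiber $(-i\nabla+k)^2$ of $\vc^{-1}$ is bounded below by a positive constant, and combined with $-\chi_{0,k} \geq 0$ this immediately gives a uniform lower bound. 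For $k$ near but not equal to $0$, $(-i\nabla+k)^2$ loses coercivity on its lowest (near-constant) mode; but by the continuity in $k$ of the Bloch fibers (established in the proof of the previous lemma), the contribution of $-\chi_{0,k}$ on that mode remains close to the strictly positive quantity $\langle e, -\chi_{0,0}\, e\rangle$, while the orthogonal complement stays controlled by $(-i\nabla+k)^2$. A compactness argument on $\BZ$ then gives a uniform fiberwise lower bound $-\chi_{0,k} + (-i\nabla+k)^2 \geq c > 0$.

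Third, uniform fiberwise coercivity together with the smoothness of the fibers in $k$ implies that $(-\chi_0 + \vc^{-1})^{-1}$ is bounded from $H^{-2}$ to $L^2$, and, by the remark of Section \ref{sec:notations} that smooth fibers preserve weighted spaces, also from $H^{-2}_N$ to $L^2_N$. The stated formulas then follow by algebraic manipulations using $-\Delta\,\vc = 1$ and $\kerker = (1-\Delta)^{-1}(-\Delta)$, namely
\begin{align*}
  \varepsilon \;=\; \vc\,(-\chi_0 + \vc^{-1}), \qquad \kerker\, \varepsilon \;=\; (1-\Delta)^{-1}(-\chi_0 + \vc^{-1}),
\end{align*}
so that $\varepsilon^{-1} = (-\chi_0 + \vc^{-1})^{-1} \vc^{-1}$ and $(\kerker\,\varepsilon)^{-1} = (-\chi_0 + \vc^{-1})^{-1}(1-\Delta)$; the claimed boundedness on $L^2_N$ then follows by composing the boundedness of each factor between the appropriate weighted spaces, using that $\vc^{-1}$ and $1-\Delta$ are bounded from $L^2_N$ to $H^{-2}_N$.
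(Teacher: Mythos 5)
Your proposal follows essentially the same route as the paper: fiber $\chi_{0}$ by the Bloch transform, identify $\chi_{0,0}$ with $F_{\varepsilon_{F}}'(W_{\per})$, invoke the strict negativity of $F_{\varepsilon_{F}}'(W_{\per})+\beta\Delta$ from Lemma \ref{lem:F_per} together with the continuity of the fibers and the compactness of $\BZ$ to get uniform fiberwise coercivity, then read off $\varepsilon^{-1}=(-\chi_{0}+\vc^{-1})^{-1}\vc^{-1}$ and $(\kerker\varepsilon)^{-1}=(-\chi_{0}+\vc^{-1})^{-1}(1-\Delta)$.

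One step needs repair as stated: a uniform bound $-\chi_{0}+\vc^{-1}\geq c>0$ does \emph{not} by itself imply that the inverse is bounded from $H^{-2}$ to $L^{2}$ (the identity operator is coercive but its inverse is not $H^{-2}\to L^{2}$). You must either add the observation that
\begin{align*}
  (-\chi_{0}+\vc^{-1})^{-1}(1-\Delta) \;=\; 1+(-\chi_{0}+\vc^{-1})^{-1}(1+\chi_{0}),
\end{align*}
with $\chi_{0}$ bounded on $L^{2}$, or do as the paper does and apply Lemma \ref{lem:F_per} with $\beta=\tfrac12$ rather than $\beta=1$, keeping half of $\vc^{-1}$ in reserve so as to obtain the stronger form inequality $-\chi_{0}+\vc^{-1}\geq c(1-\Delta)$, from which the $H^{-2}\to L^{2}$ bound (and its fiberwise analogue, needed for the weighted spaces $H^{-2}_{N}\to L^{2}_{N}$) follows directly. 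Relatedly, your ``lowest mode versus orthogonal complement'' discussion for small $k$ leaves the cross terms uncontrolled; the paper's decomposition $-\chi_{0,k}+\tfrac12\vck^{-1}=-(\chi_{0,k}-\chi_{0,0})+(-\chi_{0,0}+\tfrac12\vck^{-1})$, combined with a Young inequality to absorb the term $k\cdot(-i\nabla)$ into the reserved $\tfrac14(-\Delta)$, avoids any mode splitting. With these adjustments your argument is complete and coincides with the paper's.
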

\begin{proof}
  We have, for $V \in L^{2}$ with Bloch transform $V_{q}$
  \begin{align*}
    \chi_{0} V &= \frac 1 {2\pi i} \int_{\cont}f_{\varepsilon_{F}}(z) \int_{\BZ} \fint_{\BZ}(R_{z,k+q} V_q R_{z,k})(x,x) dk  dq dz
  \end{align*}
  and therefore $\chi_{0}$ is fibered, with fibers
  \begin{align}
    \label{eq:chi0q}
    \chi_{0,q} W &= \frac 1 {2\pi i} \int_{\cont}f_{\varepsilon_{F}}(z)  \fint_{\BZ}(R_{z,k+q}W  R_{z,k})(x,x) dk  dz
  \end{align}
  for $W \in L^{2}_{\per}$. As in Lemma \ref{lem:F_per}, inserting the
  decomposition
  $R_{z,k} = \sum_{n \in \N} (z-\varepsilon_{nk})^{-1}
  |u_{nk}\rangle\langle u_{nk}|$, we obtain the sum-over-states
  formula
  \begin{align*}
    \langle  W_{1}, \chi_{0,q} W_{2}\rangle &= \fint_{\BZ}  \sum_{n,m \geq 0} \frac{f_{n,k+q}-f_{m,k}}{\varepsilon_{n,k+q}-\varepsilon_{m,k}} \langle W_{1} u_{m,k}, u_{nk+q}\rangle\langle u_{nk+q},W_{2} u_{m,k}\rangle dk
  \end{align*}
  converging absolutely, from where it follows that $\chi_{0,q}$ is
  self-adjoint and non-positive on $L^{2}$ for all $q$, and therefore
  that $\chi_{0}$ is too.

  It follows from the regularity of $R_{z,k}$ and \eqref{eq:chi0q}
  that $\chi_{0,k}$ is analytic as bounded operators in
  $L^{2}_{\per}$, with $\chi_{0,0} = F_{\varepsilon_{F}}'(W_{\per})$.
  The operator $\vc^{-1} = -\Delta$ has fibers
  $\vck^{-1} = (-i\nabla+k)^{2}$ positive except at $k = 0$. Using
  Lemma \ref{lem:F_per} with $\beta = 1/2$, it follows that
  \begin{align*}
    -\chi_{0,k} + \frac 1 2 \vck^{-1} =
    -(\chi_{0,k}-F_{\varepsilon_{F}}'(W_{\per})) -
    F_{\varepsilon_{F}}'(W_{\per}) + \frac 1 2 \vck^{-1}
  \end{align*}
  is bounded
  away from zero for $k$ small, and therefore for all $k$. Therefore,
  there is $c > 0$ such that
  \begin{align*}
    -\chi_{0} + \vc^{-1} = -\chi_{0} + \frac 1 2 \vc^{-1} + \frac 1 2 \vc^{-1}\geq c(1-\Delta)
  \end{align*}
  as quadratic forms, from where it follows that
  $(-\chi_{0} + \vc^{-1})^{-1} \leq \frac {1} c(1-\Delta)^{-1}$ as
  quadratic forms and then that, for all $V \in L^{2}$,
  $\|(-\chi_{0} + \vc^{-1})^{-1}V\|_{L^{2}} \leq \frac 1 {c}
  \|(1-\Delta)^{-1}V\|_{L^{2}}$. The operator
  $(-\chi_{0} + \vc^{-1})^{-1}$ is therefore bounded from $H^{-2}$ to
  $L^{2}$.

  Its fibers are $(\chi_{0,k} + \vck^{-1})^{-1}$ and, for $q$ small
  enough,
  \begin{align*}
    (\chi_{0,k+q} + \vck^{-1})^{-1} =  \sum_{n \geq 0} \Big(  (\chi_{0,k} + \vck^{-1})^{-1} (\chi_{0,k+q} - \chi_{0,k})^{n} \Big) (\chi_{0,k} + \vck^{-1})^{-1}
  \end{align*}
  which shows that the family $(\chi_{0} + \vc^{-1})^{-1}_{k}$ is
  analytic on $\BZ$ as operators from $H^{-2}_{\per}$ to
  $L^{2}_{\per}$, and therefore that $(\chi_{0} + \vc^{-1})^{-1}$ is
  bounded from $H^{-2}_{N}$ to $L^{2}_{N}$. It then follows that
  $\varepsilon$ is invertible on $L^{2}_{N}$, with inverse
  \begin{align*}
    \varepsilon^{-1} = (1-\vc \chi_{0})^{-1} = \left(-\chi_{0} + \vc^{-1}\right)^{-1} \vc^{-1}.
  \end{align*}

  Finally, we have
  \begin{align*}
    (\kerker \varepsilon)^{-1} = (-\chi_{0} + \vc^{-1})^{-1} \vc^{-1} \kerker^{-1} = (-\chi_{0} + \vc^{-1})^{-1} (1-\Delta),
  \end{align*}
  hence the result.
\end{proof}

We are now ready for the
\begin{proof}[Proof of Theorem \ref{thm:existence_defect}.]
  We proceed as in Theorem \ref{thm:main_periodic}, and apply Theorem
  \ref{thm:abstract_FP} to
  \begin{align*}
    M(V,V_{\defect}) = V + \alpha \kerker (V_{\defect} + \vc G(V) - V),
  \end{align*}
  analytic in a neighborhood of $0$ in
  $L^{2} \times \vc H^{-2}$ to $L^{2}$, with Jacobian at
  $(0,0)$
  \begin{align*}
    J_{\alpha} = 1 -\alpha \kerker + \alpha \kerker \vc \chi_{0} = 1 - \alpha \kerker \varepsilon.
  \end{align*}
  Since ${\kerker \vc} = (1-\Delta)^{-1}$ is bounded, self-adjoint and
  non-negative on $L^{2}$, we have
  \begin{align*}
    \sigma(\kerker \vc \chi_{0}) \setminus \{0\}= \sigma(\sqrt{\kerker \vc} \chi_{0} \sqrt{\kerker \vc}) \setminus \{0\}.% = \sigma_{L^{2}}(\sqrt{\kerker \vc} \chi_{0} \sqrt{\kerker \vc}) \setminus \{0\}
  \end{align*}
  It follows that % $\sigma_{L^{2}_{N}}(A) \subset
  % \sigma_{L^{2}}(A)$ and, 
  by taking $\alpha_{0}$ small enough, we can impose that
  $\sigma(J_{\alpha}) \subset (-1,1]$. Since from Lemma
  \ref{lem:chi_0_def} the operator $\kerker \varepsilon$ is invertible
  on $L^{2}$, we even have that
  $\sigma(J_{\alpha}) \subset (-1,1)$, hence the result.
\end{proof}

\begin{proof}[Proof of Theorem \ref{thm:decay}.]
  The proof is based on a bootstrap argument on the equation
  \begin{align}
    \label{eq:bootstrap}
    V = \varepsilon^{-1}(V_{\defect} + \vc(G(V) - \chi_{0} V))
  \end{align}
  satisfied by $V(V_{\defect})$.

  For the base case $N=1$, we prove that $V(V_{\defect}) \in L^{2}_{1}$ by applying
  Theorem \ref{thm:abstract_FP} to
  \begin{align*}
    M(V,V_{\defect}) = \varepsilon^{-1}(V_{\defect} + \vc(G(V) - \chi_{0} V)),
  \end{align*}
  an analytic map from $L^{2}_{1} \times \vc H^{-2}_{1}$ to
  $L^{2}_{1}$ with Jacobian $0$ at $(0,0)$. It follows from the
  uniqueness of $V(V_{\per})$ that $V(V_{\per}) \in L^{2}_{1}$.

  We then use the fact that $M(\cdot,V_{\defect})$ maps $L^{2}_{1}$ to
  $L^{2}_{2}$ to conclude from
  \eqref{eq:bootstrap} that $V(V_{\per}) \in L^{2}_{2}$. Repeating
  this argument, we obtain that $V(V_{\per}) \in L^{2}_{N}$.
\end{proof}

\section*{Acknowledgments}
Stimulating discussions with Eric Canc\`es, Thierry Deutsch and
Domenico Monaco are gratefully acknowledged.
\bibliographystyle{plain}
\bibliography{scf}
\end{document}